%% Template for the submission to:
%%   The Annals of Applied Probability [AAP]
%%
%%%%%%%%%%%%%%%%%%%%%%%%%%%%%%%%%%%%%%%%%%%%%%
%% In this template, the places where you   %%
%% need to fill in your information are     %%
%% indicated by '???'.                      %%
%%                                          %%
%% Please do not use \input{...} to include %%
%% other tex files. Submit your LaTeX       %%
%% manuscript as one .tex document.         %%
%%%%%%%%%%%%%%%%%%%%%%%%%%%%%%%%%%%%%%%%%%%%%%

\documentclass[]{imsart}

%% Packages
\RequirePackage{amsthm,amsmath,amsfonts,amssymb}
\RequirePackage[numbers]{natbib}

\startlocaldefs
%%%%%%%%%%%%%%%%%%%%%%%%%%%%%%%%%%%%%%%%%%%%%%
%%                                          %%
%% Uncomment next line to change            %%
%% the type of equation numbering           %%
%%                                          %%
%%%%%%%%%%%%%%%%%%%%%%%%%%%%%%%%%%%%%%%%%%%%%%
%\numberwithin{equation}{section}
%%%%%%%%%%%%%%%%%%%%%%%%%%%%%%%%%%%%%%%%%%%%%%
%%                                          %%
%% For Axiom, Claim, Corollary, Hypothesis, %%
%% Lemma, Theorem, Proposition              %%
%% use \theoremstyle{plain}                 %%
%%                                          %%
%%%%%%%%%%%%%%%%%%%%%%%%%%%%%%%%%%%%%%%%%%%%%%
%\theoremstyle{plain}
%\newtheorem{???}{???}
%\newtheorem*{???}{???}
%\newtheorem{???}{???}[???]
%\newtheorem{???}[???]{???}
%%%%%%%%%%%%%%%%%%%%%%%%%%%%%%%%%%%%%%%%%%%%%%
%%                                          %%
%% For Assumption, Definition, Example,     %%
%% Notation, Property, Remark, Fact         %%
%% use \theoremstyle{remark}                %%
%%                                          %%
%%%%%%%%%%%%%%%%%%%%%%%%%%%%%%%%%%%%%%%%%%%%%%
%\theoremstyle{remark}
%\newtheorem{???}{???}
%\newtheorem*{???}{???}
%\newtheorem{???}{???}[???]
%\newtheorem{???}[???]{???}
%%%%%%%%%%%%%%%%%%%%%%%%%%%%%%%%%%%%%%%%%%%%%%
%% Please put your definitions here:        %%
%%%%%%%%%%%%%%%%%%%%%%%%%%%%%%%%%%%%%%%%%%%%%%

\usepackage{graphicx}
\usepackage{fancyhdr}
\usepackage{enumerate}
\usepackage{amsmath}

\usepackage[multiple]{footmisc}

\usepackage{amsthm}
\usepackage{mathabx}
\usepackage{amssymb}
\usepackage{relsize}
\usepackage{pdfsync}
\usepackage[colorlinks=true,linkcolor=blue,citecolor=magenta]{hyperref}

\usepackage[normalem]{ulem}

\usepackage{xcolor}

\usepackage{cancel}

\usepackage{calligra}

\definecolor{darkgreen}{rgb}{0,0.5,0}

\usepackage{tikz}

\usepackage{changebar}

\usetikzlibrary{decorations.pathmorphing,patterns}
\usetikzlibrary{calc,arrows}

\usepackage{pgf}

\usepackage{multicol} %multicolumns

\usetikzlibrary{automata}
\usetikzlibrary{positioning}

\tikzset{
    state/.style={
           rectangle,
           rounded corners,
           draw=black, very thick,
           minimum height=2em,
           inner sep=2pt,
           text centered,
           },
}

\usepackage{mathtools}
\usepackage{bbm}  %blackboard fonts

%\makeindex
\pagestyle{plain}
\newtheorem{theorem}{Theorem}[section]
\newtheorem{lemma}[theorem]{Lemma}

\newtheorem{remark}[theorem]{Remark}
\newtheorem{example}[theorem]{Example}

%\newcommand{\cal}{\mathcal}

%%%%%%%%%%%%mathbf

\newcommand\lang{\langle\hspace{-2pt}\langle}
\newcommand\rang{\rangle\hspace{-2pt}\rangle}

%%%%%%%%%%%%mathbb

\newcommand\bbE{{\mathbb E}}
\newcommand\N{{\mathbb N}}

\newcommand\R{{\mathbb R}}
\newcommand\RR{{\mathbb R}}

\newcommand\EE{{\mathbb E}}

%%%%%%%%%%%greek
\newcommand\ve{\varepsilon}

%%%%%%%%%%%frak

\renewcommand{\ge}{\geqslant}
\renewcommand{\le}{\leqslant}

\renewcommand{\hat}{\widehat}
\newcommand{\tphi}{\widehat\varphi}
\newcommand{\ckphi}{\widecheck\varphi}
\renewcommand{\bar}{\overline}
\numberwithin{equation}{section}

%%%%%%%%%%%%%%%various

%\newcommand{\marginnotes}[1]{\mbox{}\marginpar{\tiny\raggedright\hspace{0pt}{{\bf
%        stefano}$\blacktriangleright$  #1}}}
%
%\newcommand{\notema}[1]{\mbox{}\marginpar{\tiny\raggedright\hspace{0pt}{{\bf Marielle}$\star$  #1}}}

%\newcommand{\so2}[1]{{\color{red}#1}}
\newcommand{\blue}[1]{}

\newcommand{\so}[1]{}

\newcommand{\eqlaw}{\;\displaystyle{\mathop{=}^{\text{law}}}\;}
\newcommand{\tolaw}{\;\mathop{\longrightarrow}^{\text{law}}_{\ve\to0}\;}
\newcommand{\toas}{\;\mathop{\longrightarrow}^{\text{a.s.}}_{\ve\to0}\;}
\newcommand{\tonas}{\;\displaystyle{\mathop{\longrightarrow}^{\text{a.s.}}_{n\to\infty}}\;}

\newcommand{\tr}{{\tilde r}}

\newcommand{\tv}{{\tilde v}}

\newcommand{\tx}{{\tilde x}}
\newcommand{\ty}{{\tilde y}}

 %indicator/characteristics 
\DeclareMathOperator{\eff}{eff}

  % <----- example ----<<

\allowdisplaybreaks
\mathtoolsset{showonlyrefs}
\parskip 1mm

\endlocaldefs

\begin{document}

\begin{frontmatter}
%%%%%%%%%%%%%%%%%%%%%%%%%%%%%%%%%%%%%%%%%%%%%%
%%                                          %%
%% Enter the title of your article here     %%
%%                                          %%
%%%%%%%%%%%%%%%%%%%%%%%%%%%%%%%%%%%%%%%%%%%%%%
\title{Macroscopic diffusive fluctuations for
  generalized hard rods dynamics}
%\title{A sample article title with some additional note\thanksref{T1}}
\runtitle{Hard rods fluctuations}
%\thankstext{T1}{A sample of additional note to the title.}

\begin{aug}
%%%%%%%%%%%%%%%%%%%%%%%%%%%%%%%%%%%%%%%%%%%%%%%
%% Only one address is permitted per author. %%
%% Only division, organization and e-mail is %%
%% included in the address.                  %%
%% Additional information can be included in %%
%% the Acknowledgments section if necessary. %%
%% ORCID can be inserted by command:         %%
%% \orcid{0000-0000-0000-0000}               %%
%%%%%%%%%%%%%%%%%%%%%%%%%%%%%%%%%%%%%%%%%%%%%%%
  \author[A]{\fnms{Pablo}~\snm{A. Ferrari}\ead[label=e1]{pferrari@dm.uba.ar}}
  \and
\author[B]{\fnms{Stefano}~\snm{Olla}\ead[label=e2]{olla@ceremade.dauphine.fr}}

%%%%%%%%%%%%%%%%%%%%%%%%%%%%%%%%%%%%%%%%%%%%%%
%% Addresses                                %%
%%%%%%%%%%%%%%%%%%%%%%%%%%%%%%%%%%%%%%%%%%%%%%
\address[A]{Pablo A. Ferrari, Departamento de Matemática,
  Facultad de Ciencias Exactas y Naturales,
  Universidad de Buenos Aires \\\emph{and}  IMAS-UBA-CONICET,
Buenos Aires, Argentina\printead[presep={,\ }]{e1}}

\address[B]{Stefano Olla, CEREMADE,
   Universit\'e Paris Dauphine - PSL Research University
\emph{and}  Institut Universitaire de France \\
\emph{and} GSSI, L'Aquila \orcid{0000-0003-0845-1861}\printead[presep={,\ }]{e2}}
\end{aug}

\begin{abstract}
We study the fluctuations in equilibrium for a dynamics of rods with random length.
  This includes the classical hard rod elastic collisions, when rod lengths are constant and equal to a positive value. We prove that in the diffusive space-time scaling,
  an initial fluctuation of density of particles of velocity $v$,
  after recentering on its Euler evolution, evolve randomly shifted by a Brownian motion
of variance $\mathcal D(v)$.
\end{abstract}

\begin{keyword}[class=MSC]
\kwd[Primary ]{82C21,82D15,70F45}
% \kwd{???}
% \kwd[; secondary ]{???}
\end{keyword}

\begin{keyword}
\kwd{Hard Rods dynamics} \kwd{completely integrable systems}
  \kwd{generalized hydrodynamic limits} \kwd{diffusive fluctuations}
\end{keyword}

\end{frontmatter}
%%%%%%%%%%%%%%%%%%%%%%%%%%%%%%%%%%%%%%%%%%%%%%
%% Please use \tableofcontents for articles %%
%% with 50 pages and more                   %%
%%%%%%%%%%%%%%%%%%%%%%%%%%%%%%%%%%%%%%%%%%%%%%
%\tableofcontents

%%%%%%%%%%%%%%%%%%%%%%%%%%%%%%%%%%%%%%%%%%%%%%
%%%% Main text entry area:

\section{Introduction}
\label{sec:introduction}

The mechanical system of one-dimensional hard rods is the simplest non trivial
completely integrable dynamics where the macroscopic behavior can be described
by generalized hydrodynamics. {A rod configuration $Y$ is a set of points $(y,v,r)$, viewed as a segment/rod $[y,y+r)$ travelling at bare speed $v$, satisfying $[y,y+r)\cap[\ty,ty+\tr)$ is empty for all $(y,v,r),(\ty,\tv,\tr)$ in $Y$. Rods travel ballistically with its bare velocity until collision with other rod, when they interchange speed and length}.
 % \sout{\so{, and continue travelling at its initial velocity}}.
The density of particles of each given velocity
is conserved and in the Euler scaling the macroscopic evolution of such densities
have been studied by the pioneristic work of Percus \cite{P69} and
Boldrighini, Dobrushin and Suhov \cite{bds}.
Fluctuations around this Euler limit have been studied by Boldrighini and Wick in
\cite{boldrighini-wick}. Recently these results have been generalized to a completely
integrable dynamics of rods of random length (even negative length) where lengths
and velocities are exchanged at collision \cite{ffgs22}.
\blue{The elastic collisions are recovered in the particular case
that all rods have the same positive length.
In the case of rods with different or negative length this still defines a
 completely integrable dynamics, even though it does not correspond to a Hamiltonian
dynamics of particles.}
Similar dynamics were considered in \cite{cardy-doyon},
while in \cite{ALM78} velocities are exchanged but not the lengths
(i.e. the classical elastic collision).

In this article we investigate the evolution of the densities fluctuations in
the diffusive space-time scale for the generalized dynamics considered in \cite{ffgs22}. We consider the system in a
\emph{stationary homogeneous}
initial condition. We will discuss initial inhomogeneous non-stationary state
in~\autoref{sec:inhom-non-stat}.
% will converge to a space-homogeneous stationary state in a large Euler time scale.
% Consequently in the diffusive time scale we find the system
% essentially in a stationary state.

The result we prove in the present article
is that the initial fluctuations of the density of particles of velocity $v$,
after recentering on its Euler evolution, evolve randomly shifted by a Brownian motion
of variance $\mathcal D(v)$. This diffusion coefficient $\mathcal D(v)$
has an explicit expression depending on $v$ and on the particular stationary measure
(cf. \eqref{eq:83}).
In the case of rods of constant length $\mathcal D(v)$
is the same as computed by Spohn in
\cite{spohn1982hydrodynamical}, as well as it appears in the first order diffusive
correction to the Euler Hydrodynamic limit \cite{bds-1990-NS} \cite{MR1480034}.

This result corresponds to the following stochastic partial differential equation
for the evolution of the macroscopic fluctuation $\Xi_t(y,v,r)$
of the density of particles of velocity $v$ and length $r$ at position $y$:
\begin{equation}
  \label{eq:17}
  \partial_t \Xi_t(y,v,r) = \frac 12 \mathcal D(v) \partial_y^2 \Xi_t(y,v,r)
  + \partial_y \Xi_t(y,v,r)  \sqrt{\mathcal D(v)} \dot W_t(v), 
\end{equation}
where $(\dot W_t(v): t,v \in\R)$ is a centered Gaussian field with covariance
\begin{equation}
  \label{eq:18}
  \bbE\bigl(\dot W_t(v) \dot W_s(w)\bigr) = \delta(t-s)
  \frac{\Gamma(v,w)}{\sqrt{\mathcal D(v) \mathcal D(w)}}
\end{equation}
with $\Gamma(v,w)$ given in \eqref{eq:13}.
Since $\Xi_t$ is a distribution in
$(y,v,r)$, \eqref{eq:17} should be understood in the weak sense.

Notice that in \eqref{eq:17} \blue{the noise term is  a white noise in time,
in particular $W_t(v)$ does not depends on $y$.}
This is in contrast with the typical
diffusive evolution of fluctuations in chaotic systems, where it is expected
an additive space-time white noise driving the fluctuations and the equation would be
of the type \cite{s}
\begin{equation}
  \label{eq:19}
   \partial_t \widebar \Xi_t(y,v,r) = \frac 12 \mathcal D(v) \partial_y^2 \bar \Xi_t(y,v,r)
  + \sqrt{\mathcal D(v)} \partial_y \dot W_{y,t}(v), 
\end{equation}
with
$\bbE\bigl(\dot W_{y,t}(v) \dot W_{y',s}(w)\bigr) = \delta(t-s)\delta(y-y')\delta(v-w)$.
Notice that the equilibrium solutions of \eqref{eq:17} and \eqref{eq:19}
  have the same space-time covariance $\mathbb E(\Xi_t(y,v,r) \Xi_s(y',v,r))$,
i.e. the space-time covariance does not give informations about
the martingale term of the evolution equation.
About the uncorrelation in $r$, this persists in the macroscopic fluctuations
as consequence of the decorrelation at the microscopic scale.

In order to understand why \eqref{eq:17} arise,
we follow the behaviour of two tagged quasi-particles with the same velocity.
We call here quasi-particles (or impulsions) the particles with the
dynamics defined by the exchange of positions at the moment of collision.
The standard technique to study such dynamics is to go to the
\emph{reduced description} where quasi-particles are mapped to
points and evolve without interaction.
Then the evolution of the tagged quasi-particle
is obtained by the trivial evolution of the
corresponding point, shifted by the collisions with quasi-particles
of different velocity.
Since the points corresponding to the other quasi-particles
are distributed by a Poisson field, these collisions happen
at random times and the collisional shifts are independent.
Consequently at the Euler scale we have a law of large numbers (see Section \ref{ref})
that produce a deterministic evolution of the tagged quasi-particle
with an effective velocity
given by \eqref{23} (such ergodicity was proven first in \cite{ALM75}).
Recentering the position of the tagged quasi-particle on the
Euler deterministic behavior, we have then a functional central limit theorem
so that the position converges in law to a Brownian motion of variance $\mathcal D(v)$.
Now consider two tagged quasi-particles with the same velocity and initially
located at macroscopic distance: this means that there are typically $\ve^{-1}$
particles in between, where $\ve$ is the scaling parameter going to $0$.
In the diffusive scaling each tagged quasi-particle at time $t$
has a number of collisions proportional
to $\ve^{-2}t$, but most of them with the same quasi-particles, except for an order
$\ve^{-1}$ of collisions at the beginning and at the end of the time interval $[0,t]$.
Consequently  the two tagged quasi-particles completely correlate
in the limit $\ve \to 0$, i.e. they converge to the same Brownian motion.
This causes the rigid motion of the
corresponding density fluctuations.
Notice that in chaotic system it is expected that
the two quasi-particles at initial macroscopic distance converge to two independent
Brownian motion, generating the space-time white noise present in \eqref{eq:19}.

As far as we know, equation \eqref{eq:17} for the diffusive fluctuations for hard rods
is new in the literature. 
We recently discovered the article by Presutti and Wick \cite{MR971036}
that concerns the diffusive behavior of travelling wave initial conditions in hard rods
systems,
where in the remark after Theorem 1 they comment about possible
diffusive behaviour of fluctuations and they write:
``spatially separated fluctuations in
the density of rods with the same velocity
move with the same Brownian component''.
Strangely \cite{MR971036} is never quoted in the following literature
about Navier-Stokes corrections for the hard rods hydrodynamics
(cf. \cite{MR1480034}). In the introduction of \cite{boldrighini-wick} it was
announced a second article about the \emph{Navier Stokes corrections}
for the evolution of the fluctuations, but the authors confirmed us that
this has never been written.

We believe %, as Herbert Spohn suggested us,
that \eqref{eq:17} is a typical (universal) macroscopic
behaviour for the diffusive fluctuations of completely integrable many-body
systems \cite{spohn23}. 
Since this macroscopic behaviour is due to the scattering shift that the quasi particles have
  when \emph{crossing} particles with a different velocity, we believe that this equations are
  universal for any extended integrable system. In fact any extended integrable system is
  equivalent (in a macroscopic limit) to a collisional system
  (of \emph{quasi-particle} or \emph{impulsions} or \emph{solitons})
  and it is characterized macroscopically
  by its scattering shift \cite{spohn23}. 
  What makes the hard rods system treatable is the fact that the scattering shift
  does not depend on the velocities of the quasi-particles colliding.

Hard rods dynamics with
domain wall initial conditions, a particular case of travelling wave,
is investigated in  \cite{ds}; the paper includes the Navier-Stokes corrections and the computation of the covariance
$\Gamma(v,v')$ from Green-Kubo formula.

Diffusive corrections to the hydrodynamic Euler scaling in general completely integrable systems
  have been investigated in the physics literature, see the recent review \cite{DeNardis_2022}
  with the references therein, as well as \cite{huse-PRB.2018} and \cite{Doyon-SciPostPhys2019}.
  These articles contain general formulas for diffusion coefficients that are in agreement with our formula
  for the generalized hard rods. A generalization of our equation \eqref{eq:17} to other integrable systems
  will give answer about the macroscopic evolution on the diffusive scale
  of multipoint correlation functions, mentioned in \cite{DeNardis_2022} as an open problem.

As we believe that our approach is more elementary than the one used in the previous literature,
we have written this article to be completely independent of results on hard rods prior
to the paper \cite{ffgs22}, which is our starting point.  
Essentially, the only tools we use are the law of large numbers and the central limit theorem for a Poisson field.
In \autoref{sec:equil-stat-fluct} we prove the macroscopic evolution of
the fluctuations in the Euler scaling (recovering the result of \cite{boldrighini-wick}).
In \autoref{sec:equil-fluct-diff} we prove the evolution of the fluctuations in the
diffusive scaling. Finally in \autoref{sec:lemmaspp} %
we prove a lemma about limits for Poisson fields that we need in the proofs.

\section{Equilibrium Fluctuations in the Euler scaling}
\label{sec:equil-stat-fluct}

In the following $\mu$ is
a probability  on $\mathbb R^2$ with finite second moments, and  $\rho>0$ is a density parameter. %  we define on $\R^3$
% the measure $d\tilde\mu(x,v,r) = \rho\; dx \;d\mu(v,r)$.
For $\ve>0$, let $X^\ve$ be the Poisson process on $\mathbb R^3$ with intensity $\ve^{-1} \rho dx\,d\mu(v,r)$.
We should think about $x$ as the \emph{macroscopic} position of the point $x$,
since the typical distance between points is $\ve$. The 
\emph{macroscopic length of the rod} $(x,v,r)$ is $\ve r$.

We define 
\begin{equation}
  \label{1}
  \begin{split}
    \sigma &:= \rho \iint r d\mu(v,r), \qquad \text{length density},\\
    \pi &:= \rho \iint r v d\mu(v,r), \qquad \text{momentum density}.
\end{split}
\end{equation}
We assume throughout the paper that the mean length density is positive, see \cite{ffgs22},
\begin{align}
  \label{s>0}
  \sigma>0.
\end{align}
The empirical length measure associated to $X^\ve$ is given by
\begin{align}\label{Nep}
  N^\ve\varphi := \sum_{(x,v,r)\in X^{\ve}} \ve r\varphi(x, v, r),\qquad N^\ve(A):= N^\ve1_A.
\end{align}
\blue{Here and in the following $\varphi(x,v,r)$ is a generic continuous
  function on $\mathbb R^3$,
smooth and compactly supported in $x$ and square integrable in $(v,r)$
  with respect to the measure $r^2 d\mu(v,r)$.}
The expectation of $N^\ve\varphi$ is
\begin{align}
   \label{mu1}
  \EE N^\ve\varphi = \iiint \varphi(x,v,r)\; r\, \rho\,dx\, d\mu(v,r)=: \lang \varphi \rang,
\end{align}
where  $\lang\cdot\rang$ is called the length biased measure.

We have the classical law of large numbers for Poisson processes 
\begin{equation}
  \label{xlln}
 N^\ve\varphi   \toas % \EE N^\ve\varphi =
\lang \varphi \rang.
\end{equation}
\blue{We sketch the proof of \eqref{xlln}. Let $(X_i)_{i\ge1}$ be a sequence of
  iid Poisson processes on $\R^3$ with intensity measure~$\rho\,dx\,d\mu(v,r)$,
  and empirical measures $N_i$ defined by $N_i\varphi:=\sum_{(x,v,r)\in X_i}r\,\varphi(x,v,r)$.
  Denoting $\ve_n=\frac1n$, with $n\in\N$, we can write $X^{\ve_n} = \dot\cup_{i=1}^n X_i$, by the superposition theorem \cite{kingman}. Then, $N^{\ve_n}\varphi  = \frac1n \sum_{i=1}^n N_i \varphi\displaystyle{\tonas} \lang \varphi \rang$,
  by the law of large numbers for iid random variables.}

Considering $\varphi(x, v, r) = 1_{[0,1]}(x)$, \eqref{xlln} implies that $\ve$ times the number of points in $X^\ve$ with the $x$ coordinate in $[0,1]$ converges to the density $\sigma$, 
  $N^\ve([0,1]\times \RR^2) % =\sum_{(x,v,r)\in X^{\ve}} \ve r\varphi(x) 
  \displaystyle{\toas} \sigma$.

  The convergence \eqref{xlln} can be extended to any function in $L^2(\lang\cdot\rang_2)$, the space of square integrable functions with respect to the measure  defined by
  \begin{align}
  \label{mu2}
 \lang \varphi \rang_2:=  \iiint  \varphi(x,v,r)\, r^2 \, \rho\,dx\, d\mu(v,r).
  \end{align}

  The fluctuation field is a random signed measure on $\R^3$ defined by
  \begin{equation}
    \label{eq:10}
    \xi^{X,\ve}(\varphi) :=
  \ve^{-1/2} \bigl(N^\ve\varphi - \lang \varphi \rang\bigr).
\end{equation}
For any $\ve>0$ and $\varphi,\psi \in L^2(\lang\cdot\rang_2)$ we can compute the covariance
\begin{equation}
  \label{eq:32}
  \mathbb E\bigl(\xi^{X,\ve}(\varphi) \xi^{X,\ve} (\psi)\bigr) =
  \lang \varphi \psi \rang_2 - \lang \varphi \rang \lang \psi \rang.
\end{equation}

Using the superposition theorem as for \eqref{xlln}, we have the following central limit theorem
\begin{equation}
  \label{xclts}
  \xi^{X,\ve}(\varphi) 
  \ \tolaw\ \xi^X(\varphi),
\end{equation}
where $\xi^X(\varphi)$ is a centered Gaussian random variable with variance
\begin{equation}
  \label{eq:28}
  \mathbb E\bigl(\xi^X(\varphi)^2 \bigr) = \lang \varphi^2 \rang_2 - \lang \varphi \rang^2,
\end{equation}
and, by \eqref{eq:32},
\begin{align}
  \label{4}
  \mathbb E\bigl(\xi^X(\varphi) \xi^X(\psi)\bigr) = \lang \varphi \psi \rang_2 - \lang \varphi \rang \lang \psi \rang.
\end{align}
\blue{By the multidimensional central limit theorem
  $\xi^X(\varphi)$ and $\xi^X(\psi)$ are jointly Gaussian.
This is enough to identify $\xi^X$ as a white noise on $\R^3$ with covariance \eqref{4},
i.e. as the centered random measure on  $\R^3$ such that, for any measurable $A,B\subset \R^3$}
\begin{equation}
  \label{eq:34}
  \mathbb E\bigl(\xi^X(1_A) \xi^X(1_B)\bigr) =  \lang 1_{A\cap B} \rang_2.
\end{equation}

% The convergence \eqref{xclts} can be extended to any $\varphi\in L^2(\lang\cdot\rang_2)$.

% Notice that
% \begin{equation}
%   \label{36}
%   \EE\left((\xi^{X,\ve}\varphi)^2\right) = \lang\varphi^2\rang_2
%   - \lang \varphi \rang^2,
% \end{equation}
% so that for any $\varphi\in L^2(\lang\cdot\rang_2)$ we have the bound
% \begin{equation}
%   \label{37}
%   \EE\left((\xi^{X,\ve}\varphi)^2\right)  \le \lang\varphi^2\rang_2.
% \end{equation}
For any $a,b\in \R$, define the \emph{mass} (length) measure by 
\begin{equation}
  \label{5}
    m_a^b(X^\ve) =\ve\sum_{(x,v,r)\in X^\ve}  r\bigl(1_{[ x\in [a,b)]} - 1_{[ x\in [b,a)]}\bigr).
  \end{equation}
  Consequently,
  \begin{equation}
    \label{mab1}
     m_a^b(X^\ve)  \ \toas \   \EE m_a^b(X^\ve)=(b-a) \sigma.
\end{equation}
To each configuration $X^\ve$ and a point
$b\in \mathbb R$, there are a dilated point and configuration
\begin{align}
  D^\ve(b)&:= b + m_0^b(X^\ve) \\
Y^\ve =  D^\ve(X^\ve)&:=\{(D^\ve(x),v,r): (x,v,r)\in X^\ve\}.&& 
\end{align}
\begin{remark}
  \label{shift1}\rm
The distribution of $X^\ve$ is space shift invariant, but the distribution of the rod configuration $Y^\ve$ is not because $Y^\ve$ has no rod containing the origin. Our results can be extended to random rod configurations with space shift invariant distribution, by using Palm transforms \cite{thorisson} and Harris theorem \cite{Harris71}; see for instance \cite{fnrw21}.   
\end{remark} 

% If $r\ge 0$ for all $(x,v,r)\in X^\ve$, and the origin does not belong to a rod of $Y^\ve$, then we can define the inverse $D^{-1}(Y^\ve) = X^\ve$. 
The macroscopic dilation of the point $b$ % \so{\sout{with respect to $a$}}
is given by
\begin{equation}
  \label{2}
  \EE D^\ve(b)= b(1+\sigma).
\end{equation}

Denote  the length empirical measure induced by $Y^\ve$ by
\begin{align}
   K^\ve\varphi&:=\ve \sum_{(y,v,r)\in Y^\ve} r\varphi(y, v, r).
\end{align}
We have the law of large numbers:
\begin{align}
  K^\ve\varphi &=
    \ve \sum_{(x,v,r)\in X^\ve} r\varphi\bigl(x + m_0^x(X^\ve), v, r\bigr)\\
    & \toas\rho \iiint r\varphi\bigl(x + x \sigma,v,r\bigr) dx\; d\mu(v,r)\\
              &= \frac{\rho}{1 + \sigma}  \iiint r\varphi(y,v,r) dy\; d\mu(v,r)\\
  &=\frac{1}{1 + \sigma} \lang\varphi\rang.\label{115k}
  \end{align}

\subsection{Static CLT for the dilated configuration}
\label{sec:static-clt}

We define the fluctuation field
\begin{equation}
  \label{73}
  \xi^{Y,\ve}(\varphi) = \ve^{-1/2}\bigl(K^\ve\varphi - \EE K^\ve\varphi\bigr).
\end{equation}
We have
\begin{align}
  K^\ve\varphi - \EE K^\ve\varphi
  =  (K^\ve\varphi - A^\ve\varphi)
  + (A^\ve\varphi - \EE A^\ve\varphi)
  - (\EE K^\ve\varphi -\EE A^\ve\varphi). \label{117s}
\end{align}
where 
\begin{align}
 A^\ve\varphi&:= \ve \sum_{(x,v,r)\in X^\ve} r\varphi(x (1+\sigma), v, r),\\
\EE A^\ve\varphi&= \frac{\rho}{1 + \sigma} \iiint r\varphi(y, v, r) dy\,d\mu(v,r).
\end{align}
The last term in \eqref{117s} gives
\begin{align}
  &\vspace{-3mm} \ve^{-1/2} [\EE (K^\ve\varphi) -\EE (A^\ve\varphi)]\\
  &=  \mathbb E\Bigl( \ve^{1/2} \sum_{(x,v,r)\in X^\ve}\label{120s}
    r\bigl[\varphi\bigl( x +  m_0^x(X^\ve), v, r\bigr)
      - \varphi( x (1+\sigma), v, r)\bigr]\bigr)\\
  &=  \mathbb E\Bigl( \ve^{1/2} \sum_{(x,v,r)\in X^\ve}
    r(\partial_y \varphi)( x (1+\sigma), v, r)\bigl(m_0^x(X^\ve) - x\sigma\bigr)
     \Bigr) + R^\ve \label{890s}\\
&=  \iiint  r(\partial_y \varphi)( x (1+\sigma), v, r)
 \ve^{-1/2} \bigl[\mathbb E(m_0^x(X^\ve) - x\sigma)\bigr]  dx\; d\mu(v,r) + R^\ve \label{123s}\\
  &= R^\ve,\label{124s}
\end{align}
where $R^\ve$ denotes a generic term small with $\ve$.
Identity \eqref{123s} follows from Slyvniak-Mecke formula
(Theorem 3.2 in \cite{MR2004226}).
Since by \eqref{mab1} $\mathbb E(m_0^x(X^\ve) - x\sigma)=0$, % the first term in \eqref{123s} is null,
 the identity \eqref{124s} follows.

 Defining
 \begin{equation}
   \label{eq:37}
   \ckphi (x,v,r) 
  := \varphi((1+ \sigma) x, v ,r),
 \end{equation}
 and applying \eqref{xclts} to the second term in \eqref{117s} we have
\begin{align}
  \notag%\label{122}
  &\vspace {-5mm}\ve^{-1/2}[A^\ve\varphi -\EE (A^\ve\varphi)]\\
  &= \ve^{-1/2}\Bigl[\ve \sum_{(x,v,r)\in X^\ve} r \ckphi ( x, v, r) -
    \rho \iiint r \ckphi (x,v,r) dx\; d\mu(v,r) \Bigr]\notag\\
   &\tolaw   \xi^X(\ckphi). \label{aea1s}
\end{align}
Notice that
\begin{align}
  \label{10s}
  &\hspace{-4mm}\EE (\xi^X(\ckphi) \xi^X(\widecheck\psi))\\
  &= \rho\iiint r^2 \ckphi (x,v,r)
    \widecheck\psi( x,v,r) dx d\mu(v,r) - \lang \ckphi\rang  \lang \widecheck\psi\rang\\
    &= \frac{\rho}{1+\sigma} \iiint r^2 \varphi(y,v,r) \psi(y,v,r) dy d\mu(v,r)
    - \frac 1{(1+\sigma)^2} \lang \varphi\rang  \lang \psi\rang\\
    &= \frac{1}{1+\sigma} \lang\varphi\psi\rang_2-\frac 1{(1+\sigma)^2}\lang \varphi\rang  \lang \psi\rang.
\end{align}

Finally we Taylor expand the first term of the RHS of \eqref{117s}:
  \begin{equation}
    \begin{split}
  \ve^{-1/2}(K^\ve\varphi - A^\ve\varphi)
    =   \ve^{1/2}\sum_{(x,v,r)\in X^\ve} r\bigl( \varphi\bigl(x + m_0^x(X^\ve), v, r\bigr)
   -\varphi(x (1+\sigma), v, r)\bigr)\\
   = \frac1{1+\sigma} \ve^{1/2}\sum_{(x,v,r)\in X^\ve} r (\partial_x \ckphi)( x, v, r)
   \bigl(  m_0^x(X^\ve) - x \sigma \bigr) + R^\ve
   \label{11s},
  \end{split}
\end{equation}
\blue{and the expectation of the rest $R^\ve$ is small again
  by using the Slyvniak-Mecke formula.}
By the functional central limit theorem  \eqref{xclts} {we have for any $z\in \mathbb R$}
\begin{align}
  \label{13s}
  B^\ve(z) & := \ve^{-1/2}\bigl(  m_0^{z}(X^\ve) - z \sigma \bigr)
  =  \xi^{X,\ve}(1_{[0,z]}1_{z>0} - 1_{[z,0]}1_{z<0} ) \\
 & \tolaw
   \xi^{X}(1_{[0,z]}1_{z>0} - 1_{[z,0]}1_{z<0} ) =: B(z),
  % \begin{cases}
  %   \xi^X(1_{[0,z]}), & z>0\\
  %     - \xi^X(1_{[z,0]}),& z<0
  % \end{cases}
\end{align}
{where $(B(z):z\in \RR)$ is a bilateral Brownian motion. Using Lemma
\ref{LemmaPP} we have}
\begin{align}
  \label{14s}
  &\ve^{-1/2}(K^\ve\varphi - A^\ve\varphi)  \\
  &\tolaw \frac{\rho}{1+\sigma}\iiint r (\partial_x \ckphi)(x, v, r) B(x) dx \; d\mu(v,r)\label{131s}\\
    &= \frac 1{1+\sigma}
    \int dx B(x) \partial_x \Bigl(\rho \iint r \ckphi (x, v, r) \; d\mu(v,r)\Bigr)\\
    &=\frac 1{1+\sigma}  \int dx\; \xi^X\bigl(1_{[0,x]}1_{x>0} - 1_{[x,0]}1_{x<0}\bigr)
    \partial_x \Bigl(\rho \iint r \ckphi(x, v, r) \; d\mu(v,r)\Bigr)\\
    &=\frac 1{1+\sigma} \xi^X\Bigl( \int dx \bigl(1_{[0,x]}1_{x>0} - 1_{[x,0]}1_{x<0}\bigr)
    \partial_x \Bigl(\rho \iint r \ckphi (x, v, r) \; d\mu(v,r)\Bigr) \Bigr)\\
  &= - \frac {1}{1+\sigma}
  \xi^X\Bigl(\rho\, \iint r' \ckphi (\cdot, v', r') \; d\mu(v',r')\Bigr).\label{135s}
\end{align}
We conclude that
\begin{equation}
  \label{16s}
\ve^{1/2}(K^\ve\varphi - A^\ve\varphi) \tolaw  -  \xi^X(P \ckphi),
\end{equation}
where
\begin{equation}
  \label{15s}
  P\ckphi (x) = \frac{\rho}{1+\sigma} \iint r \ckphi(x, v', r') \; d\mu(v',r') =
   \frac{\rho}{1+\sigma} \iint r \varphi((1+\sigma) x, v', r') \; d\mu(v',r').
\end{equation}
Putting together \eqref{16s}, \eqref{120s}-\eqref{124s} and \eqref{aea1s}
we have shown that
\begin{equation}
  \label{17}
  \xi^{Y,\ve} (\varphi)\tolaw \xi^Y(\varphi) = \xi^X\Bigl(\ckphi -  P \ckphi\Bigr)
  = \xi^X\bigl(C \ckphi\bigr),
\end{equation}
where $C = I-  P$.
\blue{By the same argument used for $\xi^X$}, this identifies $\xi^Y$ as the centered Gaussian field with covariance
\begin{align}
  \notag%
  \EE( \xi^Y(\varphi) \xi^Y(\psi)) \;&= \;\EE(\xi^X(C \widecheck\varphi) \xi^X(C \widecheck\psi))\\
   & =\rho  \iiint r^2 C \varphi(x(1+\sigma),v,r)
 C \psi(x(1+\sigma),v,r) dx d\mu(v,r),
    \\
&= \frac{\rho}{1+\sigma} \iiint r^2 C\varphi(y,v,r)
C\psi(y,v,r) dy d\mu(v,r)\\
&= \frac{\rho}{1+\sigma} \lang C\varphi C\psi \rang_2.
\end{align}
That means for the Fourier transforms 
\[\tphi(k,v,r) = \int e^{i2\pi k y} \varphi(y,v,r) dy
\]
\begin{equation}
  \label{18k}
  \begin{split}
    \EE(\xi^Y(\varphi) \xi^Y(\psi) )
    = \frac{\rho}{1+\sigma} \iiint r^2 C\hat\varphi(k,v,r)^*
  C\hat\psi(k,v,r) dk d\mu(v,r).
\end{split}
\end{equation}
i.e. a covariance operator
\begin{equation}
  \label{19}
 \mathcal C=  \frac{\rho}{1+\sigma} r^2C^2 =
  \frac{\rho}{1+\sigma} r^2 \Bigl(I + \Bigl( \frac{\sigma^2}{(1+\sigma)^2}
      - 2  \frac {\sigma}{1+\sigma}\Bigr)P \Bigr).
    \end{equation}

    \begin{remark}\rm
{This is in agreement with formula (7.61) % , page 96,
  in % \so{\sout{Spohn's book}}
  \cite{s}.}
\end{remark}

\begin{example}\label{ex:discrete}
In the case
$d\mu(v,r) = \frac 12\bigl(\delta_{v_0}(dv) + \delta_{-v_0}(dv)\bigr) \delta_{a}(dr)$
we have $\sigma = \rho a$ and $\pi = 0$.
Then,
\begin{gather}
  P\varphi (x) = \frac {\rho a}{2(1+\rho a)} \left(\varphi(x,v_0) +  \varphi(x,-v_0)\right)\\
  \label{46}
  C\varphi(x, \pm v_0) %= \varphi(x,v_0) - \frac{\rho a}{1+\rho a} \frac 12 \varphi(x,v_0)
  = \frac{2 + \rho a}{2(1+\rho a)} \varphi(x,\pm v_0) - \frac {\rho a}{2(1+\rho a)}\varphi(x,\mp v_0).
\end{gather}
% and
% \begin{equation}
%   \label{52}
%   \mathcal C = \frac{\rho a^2}{1+\rho a} \frac{(2 + \rho a)^2}{4(1+\rho a)^2}.
% \end{equation}
  
\end{example}

\subsection{Equilibrium fluctuations in the Euler scaling}
\label{ref}

Recall that the interparticle distance is of order $\ve$, i.e. the coordinates $(x,v,r)$
  are already rescaled in the Euler scale.
% The tag $v$ represent the macroscopic
% velocity of the rod in the Euler scaling, so it is of order~1
% and does not need to be rescaled.
Let $X^\ve_t$ denote the free gas configuration at time $t$:
\begin{align}
  \label{xet1}
  X^\ve_t:=\{(x+vt,v,r):(x,v,t)\in X^\ve\}.
\end{align}
{For any $(x,v)\in \mathbb R^2$ and $t\ge 0$} define the flow
  \begin{equation}
  \label{jet1}
  \begin{split}
    j^\ve(x,v,t) &:= \ve\sum_{(\tx ,\tv ,\tr)\in X^\ve} \tr
    \bigl(1_{[\tv <v]} 1_{[x< \tx <x+(v-\tv )t]} - 1_{[\tv >v]}1_{[x+(v-\tv )t<\tx <x]}\bigr).
    \end{split}
  \end{equation}
 % \sout{\paf{Notice that $j^\ve(x,v,t)$ is a function of $X^\ve$ and any real numbers $(x,v,t)$, it %does not require $(x,v,r)\in X^\ve$ for some $r$.}}
  For any fixed $(x,v)$, the expectation of $ j^\ve(x,v,t)$ is given by
   \begin{align}
   j(x,v,t) &:= \EE j^\ve(x,v,t)   \\
    &= \iiint \rho\; dx\; d\mu(\tv ,\tr) \,\tr\,
      \bigl(1_{[\tv <v]} 1_{[x< \tx <x+(v-\tv )t]} - 1_{[\tv >v]}1_{[x+(v-\tv )t<\tx <x]}\bigr)\\
     &=  \rho \int  \tr \int_v^{+\infty} (v-\tv )\; t\; d\mu(\tv ,\tr) +
   \rho \int  \tr \int_{-\infty}^v (v-\tv )\; t\; d\mu(\tv ,\tr)\\
   &= t v\sigma - t \pi. \label{Ejet1}
 \end{align}
  % \so{\sout{Here  $j^\ve(x,v,t)$ is the ideal gas integrated
  %     flow along the segment $(x+vs)_{s\in[0,t]}$, and $j(x,v,t)$ is its expectation.}}

  We have the following limit as a consequence of the law of large numbers
  \begin{align}\label{eq:llnj}
   j^\ve(x,v,t)  \ \toas  j(x,v,t) & = t(v\sigma - \pi).
\end{align}

%   \begin{figure}[htb]\label{fig:qs}
% \begin{center}
%   \includegraphics[width=0.75\textwidth]{Hardrodsdyn.pdf}
% \end{center}
%\end{figure}
  \begin{figure}[htb]
\begin{center}
  \includegraphics[width=0.5\textwidth]{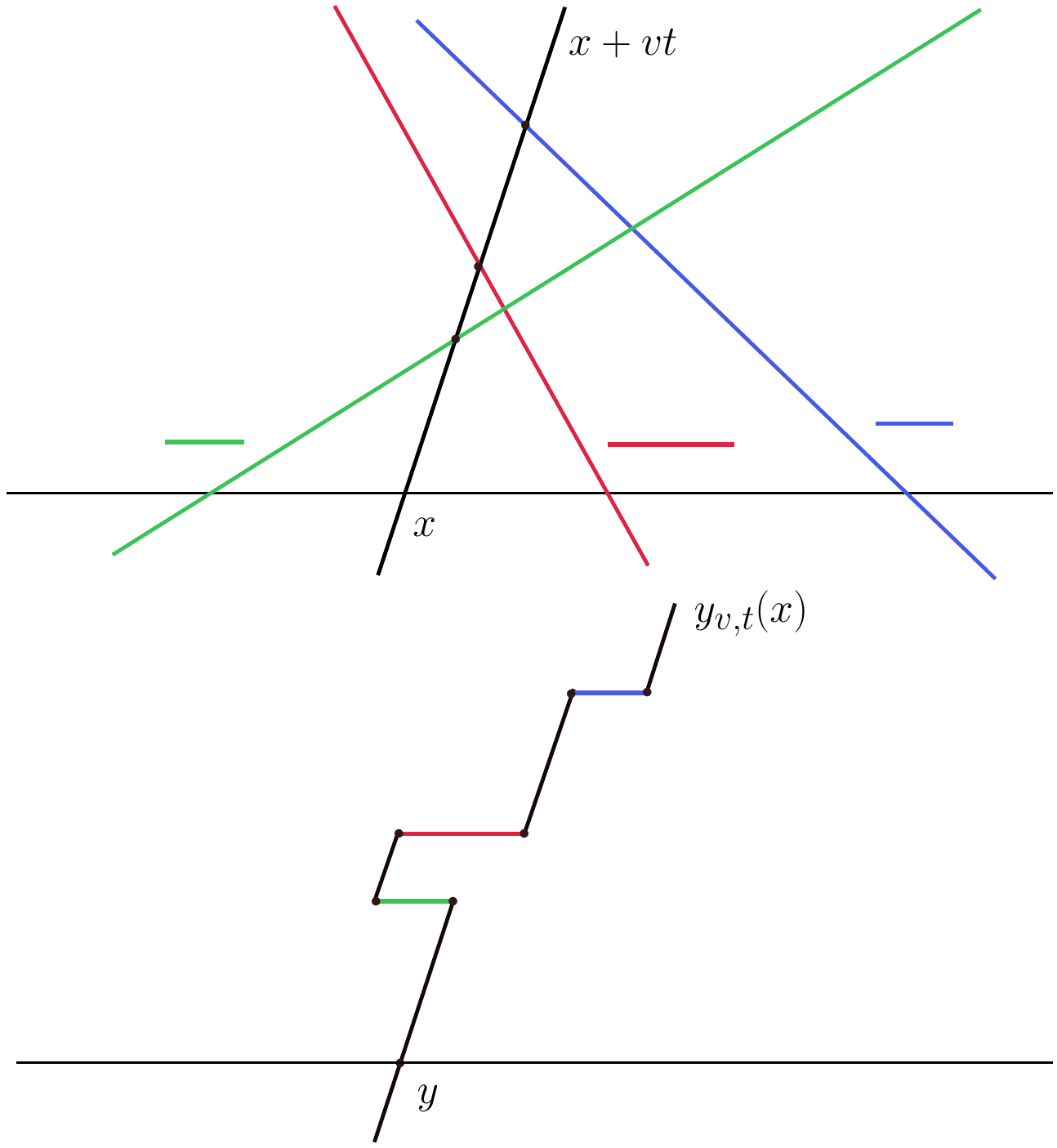}
  \caption{Upper figure: The ideal gas trajectory $(x+vt)_{t\in\RR}$ and the trajectories of the other particles, whose associated lengths are positive and represented by horizontal segments. Bottom: the associated quasi-particle trajectory $y_{v,t}(x)$.\label{fig:qs}}
\end{center}
\end{figure}

For any $(x,v)\in \mathbb R^2$ and $t\ge 0$, we define
  \begin{align}
    y^\ve_{v,t}(x) &:= x+ m_0^x(X^\ve) + vt+ j^\ve(x,v,t),   \label{yet1}
  \end{align}
see Fig. \ref{fig:qs}; its average is
  \begin{align}
    y_{v,t}(x) &:= \EE y^\ve_{v,t}(x) %= (1+\sigma)x + vt+j(x,v,t)
                 = x(1+\sigma) + vt +j(x,v,t)= x(1+\sigma) + t v(1+\sigma) - t\pi.
                 \label{y1}
  \end{align}

  {We say that  $y^\ve_{v,t}(x)$ is the position at time $t$ of the 
    \emph{quasi-particle} with velocity $v$ starting from $x$ (see Figure \ref{fig:qs}).
    Notice that this definition is valid for all $x,v$, it does not require that $(x,v)$
    are points in $X^\ve$.}

By \eqref{eq:llnj}
 \begin{equation}
   \label{22}
    y^\ve_{v,t} - y  \ \toas\  
    v^{\eff}(v) t,
 \end{equation}
 where the effective velocity is given by
 \begin{equation}
   \label{23}
 v^{\eff}(v) : % = \frac1t \EE( y^\ve_{v,t} - y)
   = v(1+\sigma) - \pi.
 \end{equation}
 
% \out{
%    \begin{align}
%   \frac1t j^\ve(x,v,t)  \ \toas \frac1t j(x,v,t) & =v\sigma - \pi.
% \end{align}
% }

We have that, by \eqref{xlln}, the free gas empirical length measure  at time $t$ satisfies
  \begin{align}
    \label{net1}
    N^\ve_t\varphi := \ve \sum_{(x,v,r)\in X^\ve}    r\varphi( x+ vt, v, r)\toas
   \rho\iiint r\varphi(x+vt,v,r)\; dx \; d\mu(v,r) = \lang \varphi \rang.
  \end{align}

  The  $X$-fluctuation field at time $t$ satisfies
\begin{align}
\label{20}
  \xi^{X,\ve}_t (\varphi) &:=  \ve^{-1/2}\bigl(N^\ve_t\varphi  - \lang \varphi \rang\bigr)\tolaw \xi^X(\varphi_t),
\end{align}
where
 \begin{align}
  \varphi_t(x,v,r) &:= \varphi(x +tv,v,r). \label{ft56}
 \end{align}
The hard rod configuration and empirical measure at time $t$ are given by
\begin{align}
  Y^\ve_t &:= \bigl\{({y^\ve_{v,t}(x)},v,r):(x,v,r)\in X^\ve\bigr\},  \label{yy}\\
   K^\ve_t\varphi &:= \ve \sum_{(y,v,r)\in Y^\ve_t}    r\varphi( y, v, r)=
    \ve \sum_{(x,v,r)\in X^\ve} r\varphi({y^\ve_{v,t}(x)}, v, r).
   \label{Yem}
 \end{align}
 Using \eqref{115k} and similar argument used at the beginning of section \ref{sec:static-clt},
 we have the LLN for $K^\ve_t$:
\begin{align}
  \label{lnn0}
  K^\ve_t\varphi \mathop{\longrightarrow}_{\ve \to 0}&\; \rho \iiint r\varphi(y_{v,t}(x),v,r) dx\; d\mu(v,r)\\
  = &\;\rho \iint r \Bigl(\int \varphi\bigl(x(1+\sigma) + v^{\eff}(v) t,v,r\bigr) dx\Bigr) \; d\mu(v,r) \\
              = &\;\frac{\rho}{1 + \sigma}  \iiint r\varphi(y,v,r) dy\; d\mu(v,r)%
    =\frac{\rho}{1 + \sigma} \lang\varphi\rang.
\end{align}
%See details in \cite{ffgs22}.

We define the $Y$-fluctuation field at time $t$ by 
\begin{equation}
  \label{73t}
  \xi^{Y,\ve}_t(\varphi) := \ve^{-1/2}\bigl(K^\ve_t\varphi - \EE K^\ve_t\varphi\bigr).
\end{equation}

In order to shorten notations in the following we set $\pi = 0$,
the extension to the case $\pi\neq 0$ is straightforward.
We have
\begin{align}
  K^\ve_t\varphi - \EE K^\ve_t\varphi
  =  (K^\ve_t\varphi - A^\ve_t\varphi)
  + (A^\ve_t\varphi - \EE A^\ve_t\varphi)
  - (\EE K^\ve_t\varphi -\EE A^\ve_t\varphi). \label{117}
\end{align}
where 
\begin{align}
  A^\ve_t\varphi&:= \ve\sum_{(x,v,r)\in X^\ve} r\varphi\bigl((x+vt)(1+\sigma), v, r\bigr).%
\end{align}
The last term in \eqref{117} gives
\begin{align}
  &\hspace{-4mm} \ve^{-1/2} (\EE K^\ve_t\ \varphi -\EE A^\ve_t\ \varphi)\\
  &=  \mathbb E\Bigl( \ve^{1/2} \sum_{(x,v,r)\in X^\ve}
    r\bigl[\varphi\bigl( x +  m_0^x(X^\ve)+ vt + j^\ve(x,v,t), v, r\bigr)
  %&\qquad\qquad\qquad\qquad\qquad\qquad
    \label{120}\\
  &\qquad\qquad\qquad\qquad - \varphi\bigl( (x+vt)(1+\sigma), v, r\bigr)\bigr]\Bigr)\\
  &=  \mathbb E\Bigl( \ve^{1/2} \sum_{(x,v,r)\in X^\ve}
    r(\partial_y \varphi)\bigl( (x+vt)(1+\sigma), v, r\bigr)\label{890}\\
  &\qquad\qquad\qquad\qquad \times \bigl(m_0^x(X^\ve_t) +  j^\ve(x,v,t) - (x+ vt) \sigma\bigr)
     \Bigr) + R^\ve_t \\
&=  \iiint  r(\partial_y \varphi)\bigl( (x+vt)(1+\sigma), v, r\bigr) \label{123}
 \ve^{-1/2} \\
  &\qquad\qquad\qquad\qquad \times \mathbb E\bigl[m_0^x(X^\ve)  +  j^\ve(x,v,t) - (x+ vt) \sigma\bigr] dx\; d\mu(v,r) + R^\ve_t\\
  &= R^\ve_t,\label{124}
\end{align}
where $R^\ve_t$ is of smaller order {in $\ve$.
  \blue{The identity \eqref{123}
  follows from the Slyvniak-Mecke formula} and from \eqref{mab1} and \eqref{Ejet1}.}
By \eqref{xclts} the second term in \eqref{117} gives
\begin{align}
  \notag%\label{122}
  &\vspace {-5mm}\ve^{1/2}(A^\ve_t-\EE A^\ve_t)% \\
  % &= \ve^{-1/2}\Bigl[\ve \sum_{(x,v,r)\in X^\ve} r\varphi\bigl( x (1+ \sigma)+vt +  j^\ve(x,v,t), v, r\bigr)\\  &\qquad
  %    -  \rho \iiint r\varphi\bigl(x (1+ \sigma)+ vt+ j^\ve(x,v,t),v,r\bigr) dx\; d\mu(v,r) \Bigr]\notag\\
  %  &
     \tolaw   \xi^X_t(\ckphi_t), \label{aea1}
\end{align}
where
\begin{equation}
  \label{9}
 \ckphi_t(x,v,r) := \varphi\bigl((x+vt) (1+ \sigma) ,v,r\bigr).
\end{equation}
Finally, the first term in \eqref{117} gives 
\begin{align}
  &\hspace{-4mm}\ve^{1/2}(K^\ve_t\varphi - A^\ve_t\varphi)\\
  & = \ve^{1/2}\sum_{(x,v,r)\in X^\ve}  r \bigl[\varphi\bigl(  x+ m_0^x(X^\ve) + vt+ j^\ve(x,v,t), v, r\bigr)
   - \varphi\bigl( (x+vt) (1+\sigma), v, r\bigr) \bigr]\\
  & = \ve^{1/2}\sum_{(x,v,r)\in X^\ve}
    r( \partial_x\varphi)\bigl((x+vt) (1+\sigma), v, r\bigr)
    (m_0^x(X^\ve)+  j^\ve(x,v,t) - \sigma (x+vt)) + R^\ve_t
  \\
  & = \ve\sum_{(x,v,r)\in X^\ve}  r (\partial_x\varphi)\bigl(  (x+vt) (1+\sigma), v, r\bigr)
    \ve^{-1/2} (m_0^x(X^\ve) + j^\ve(x,v,t) - \sigma (x+vt))  + R^\ve_t,
    \label{11}
\end{align}
where $R^\ve_t$ is smaller order.
It can be easily shown that
\begin{equation}
  \label{jet1}
  \begin{split}
  \ve^{-1/2}  j^\ve(0,0,t) &:= \ve^{1/2} \sum_{(\tx ,\tv ,\tr)\in X^\ve} \tr
    \bigl(1_{[\tv <0]} 1_{[0< \tx < -\tv t]} - 1_{[\tv >0]}1_{[-\tv t<\tx <0]}\bigr)
    \end{split}
  \end{equation}
  converges in law to a Wiener process $\tilde B(t,0)$ with a finite variance.
  By applying Lemma \ref{LemmaPP} we have
  \begin{equation}
    \label{eq:22}
    \begin{split}
    &\ve\sum_{(x,v,r)\in X^\ve}  r (\partial_x\varphi)\bigl(  (x+vt) (1+\sigma), v, r\bigr)
    \ve^{-1/2}  j^\ve(0,0,t)\\
    &\tolaw  \tilde B(t,0)
    \iint d\mu(v,r)  r \int (\partial_x\varphi)\bigl(  (x+vt) (1+\sigma), v, r\bigr) dx = 0.
\end{split}
\end{equation}
Consequently we can write \eqref{11} as
\begin{align}
  \label{eq:23}
  \begin{split}
    &\ve\sum_{(x,v,r)\in X^\ve}  r (\partial_x\varphi)\bigl(  (x+vt) (1+\sigma), v, r\bigr) \\
  &\qquad \qquad \times\,  \ve^{-1/2} (m_0^x(X^\ve) + j^\ve(x,v,t) -  j^\ve(0,0,t) - \sigma (x+vt))
    + R^\ve_t.
  \end{split}
\end{align}
Remark that
\begin{equation}
  \label{eq:24}
  m_0^x(X^\ve) + j^\ve(x,v,t) =  j^\ve(0,0,t) + m_0^{x+vt}(X_t^\ve) ,
\end{equation}
and the first term in \eqref{eq:23} can be written as
\begin{equation}
  \label{eq:25}
  \ve\sum_{(x,v,r)\in X^\ve}  r (\partial_x\varphi)\bigl(  (x+vt) (1+\sigma), v, r\bigr)
    \ve^{-1/2} (m_0^{x+vt}(X_t^\ve) - \sigma (x+vt)).
  \end{equation}
  By \eqref{20} we have
  \begin{equation}
    \ve^{-1/2} (m_0^{x}(X_t^\ve) - \sigma (x)) \tolaw
   \xi_t^X\bigl(1_{[0,x]} 1_{[x>0]} - 1_{[x,0]} 1_{[x<0]}\bigr) =:  \bar B(t,x), 
   \label{eq:26}
 \end{equation}
 where, for any $t\ge 0$, $\bar B(t,x)$ is a double sided Wiener process in $x\in \mathbb R$. 
Then, by Lemma \ref{LemmaPP}, \eqref{eq:25} converges in law to
\begin{equation}
  \label{eq:27}
  \begin{split}
  & {\rho}\iiint r (\partial_x\varphi) \bigl((x+vt) (1+\sigma), v, r\bigr) \, \bar B(t, x+vt)\,
   dx \; d\mu(v,r)\\
  & = {\rho}\iiint r (\partial_x\varphi) \bigl((x) (1+\sigma), v, r\bigr) \, \bar B(t, x)\,
   dx \; d\mu(v,r)\\
   &=\frac \rho{1+\sigma}  \iiint\xi_t^X\bigl(1_{[0,x]} 1_{[x>0]} - 1_{[x,0]} 1_{[x<0]}\bigr)\,
       r\,\partial_x   \ckphi(x, v, r) \;dx\, d\mu(v,r)\\
   &= - \frac {1}{1+\sigma}\,
    \xi^X_t\Bigl(\rho\, \iint r \ckphi(\cdot , \tv , \tr) \; d\mu(\tv ,\tr)\Bigr) \\
&= - \frac {1}{1+\sigma}\,
    \xi^X\Bigl(\rho\, \iint r \ckphi_t(\cdot , \tv , \tr) \; d\mu(\tv ,\tr)\Bigr),
 \end{split}
\end{equation}
where $\ckphi_t(x, \tv , \tr) = \ckphi_t(x+\tv t , \tv , \tr)$.

Recalling $P\varphi$ defined in \eqref{15s},
we conclude that
\begin{equation}
  \label{16}
\ve^{1/2}(K^\ve_t\varphi - A^\ve_t\varphi) \tolaw  -  \xi^X(P \ckphi_t).
\end{equation}
Putting together \eqref{16}, \eqref{120}-\eqref{124} % and \eqref{aea1} 
we have shown that
\begin{equation}
  \label{17t}
  \xi^{Y,\ve}_t (\varphi)\tolaw \xi^Y_t(\varphi) := \xi^X(\ckphi_t -  P \ckphi_t)
  = \xi^X(C\ckphi_t),
\end{equation}
where $C := I-  P$. %
Recalling \eqref{17}, we have proven that
 \begin{equation}
   \label{48}
   \xi^Y_t(\varphi) = \xi^Y_0(\varphi_t), 
 \end{equation}
i.e.
\begin{equation}
  \label{49}
  \partial_t \xi^Y_t(\varphi) = \xi^Y_0(v^{\eff}\partial_x\varphi_t) =
  \xi^Y_t(v^{\eff}\partial_x\varphi).
\end{equation}
In other words, in a weak sense $\xi^Y_t$ satisfies  the equation
\begin{equation}
  \label{50}
   \partial_t \xi^Y_t + v^{\eff}\partial_x \xi^Y_t = 0,
 \end{equation}
which is the expected equation.

\section{Equilibrium Fluctuations in the diffusive scaling}
\label{sec:equil-fluct-diff}
\subsection{ Quasi-particles in the diffusing scaling}
\label{sec:tagg-rods-diff}

Recall the definition of $j^\ve(x,v, t)$ given by \eqref{jet1} for any $(x,v)\in \mathbb R^2$.
and the corresponding convergence \eqref{eq:llnj}. We look now at this flow in a larger,
diffusive, time scale, and we have that
\begin{equation}
  \label{80j}
j^\ve(x,v,\ve^{-1}t) - \bigl(v\sigma-\pi\bigr) \ve^{-1} t
  \ \tolaw
   \ B_t(v).
 \end{equation}
 where $\bigl(B_t(v)\bigr)_{v\in\R} =: \mathbb B_t$
 is a family of Brownian motions in $t$ with
 covariance given by \eqref{eq:13} and initial value $B_0(v)=0$ for all $v$.
 Notice in \eqref{80j} that $B_t(v)$ does not depend on $x$.
 In fact we will see below that, given two points $x$ and $\tx$,
 the variance of the difference $j^\ve_{v,t}(x,v,\ve^{-1}t)-j^\ve_{v,t}(\tx,v,\ve^{-1}t)$
 is of order $2\ve |x-\tx|$, as there are around $\ve^{-1} |x-\tx|$
 particles between $x$ and $\tx$, and between $x+vt$ and $\tx+vt$,
 and each of those particles contributes $\ve^2$ to the variance, see \eqref{eq:84}.
 This explains why the limit in \eqref{80j} does not depend on $x$. 

 Given a point $(x,v) \in \mathbb R^2$ recall that $y^\ve_{v,t}(x)$ is defined by \eqref{yet1}.
As a consequence of \eqref{80j} we have
\begin{equation}
  \label{eq:79}
  \begin{split}
y^\ve_{v,\ve^{-1}t}(x) - v^{\eff}(v) \ve^{-1} t \ \tolaw
   \ y + B_t(v),     % \sqrt{ \mathcal D(v)} W_t(v),
  \end{split}
\end{equation}
where $y=x(1+\sigma)$. % and $\bigl(B_t(v)\bigr)_{v\in\R} =: \mathbb B_t$
% is a family of Brownian motions in $t$ with covariance given by \eqref{eq:13} and initial value $B_0(v)=0$ for all $v$.
Observe that
\begin{align}
  \label{800}
  y^\ve_{v,\ve^{-1}t}(x)  - v^{\eff}(v) \ve^{-1} t  =
  x + m^x_0(X^\ve) - (1+\sigma)x +  j^\ve(x,v,\ve^{-1}t) - \bigl(v\sigma-\pi\bigr) \ve^{-1} t.
\end{align}
Since
$m^x_0\displaystyle{\toas} \sigma x$, the limit in \eqref{eq:79} is equivalent to  \eqref{80j}

We compute explicitly the covariances.
By \eqref{yet1} % and \eqref{29} %
 Observe that
 \begin{equation}
   \label{82}
   \ve \sum_{(\tx,\tv,\tr )\in X^{\ve}} \tr  1_{[\tv<v]} 1_{[x< \tx<x+(v-\tv)\ve^{-1}t]}
   = \ve \sum_{(\tx,\tv,\tr )\in X^{\ve^{2}}} \tr  1_{[\tv<v]} 1_{[\ve x< \tx<\ve x+(v-\tv)t]},
 \end{equation}
 where $X^{\ve^{2}}:=\{(\ve \tx,\tv,\tr): (\tx,\tv,\tr)\in X^\ve\}$,
 is obtained from $X^{\ve}$ by rescaling all positions by a factor
 $\ve$, so that $X^{\ve^{2}}$ is a Poisson process of intensity measure
 $\ve^{-2}\rho\, d\mu(v,r)$.
We have that
 \begin{equation}
   \label{eq:87}
   \mathbb E\Bigl(\ve^2 \sum_{(x',v',r')\in X^{\ve^{2}}}
     r' 1_{[v'<v]} 1_{[\ve x< x'<\ve x+(v-v')t]}\Bigr) =
  t \rho\int r \int_{-\infty}^v (v-v') d\mu(v',r).
\end{equation}
{
  Define the fluctuation field
  \begin{equation}
    \label{eq:29}
    \tilde\xi^\ve(\varphi) = \ve \sum_{(\tx,\tv,\tr )\in X^{\ve^{2}}} \varphi(x,v,r)
    - \ve^{-1} \lang \varphi \rang \\
    = \ve \sum_{(\tx,\tv,\tr )\in X^{\ve}} \varphi(\ve x,v,r) - \ve^{-1} \lang \varphi \rang.
  \end{equation}
  As for $\xi^\ve$, cf.  \eqref{xclts}, $\tilde\xi^\ve$ converge to a centered Gaussian field
   $\tilde\xi$ with
   the same covariance as $\xi$, cf. \eqref{4}. But  $\tilde\xi$ and  $\xi$ are independent.
   In fact consider two test function $\varphi, \psi$ such that  $\lang \varphi \rang=0$ and
   $\lang \psi \rang=0$:
   \begin{equation}
     \label{eq:30}
     \mathbb E\Bigl( \tilde\xi^\ve(\varphi) \xi^\ve(\psi)   \Bigr) =
     \ve^{1/2} \iint d\mu(v,r) r^2 \int \varphi(\ve x,v,r) \psi(x,v,r) \; dx
     \mathop{\longrightarrow}_{\ve \to 0} 0.
   \end{equation}

 Consider the function
\begin{equation}
  \label{eq:81o}
   \varphi_{\ve x,v,t}(x',v') = 1_{[v'<v]} 1_{[\ve x< x'<\ve x+(v-v')t]}
   -1_{[v'>v]}1_{[\ve x+(v-v')t<x'<\ve x]},
 \end{equation}
 where $(x',v',r') \in X^{\ve^{2}}$.
 Then
 \begin{equation}
   \label{eq:31}
     j_{X^\ve}(x,v,\ve^{-1}t) - (v\sigma-\pi) \ve^{-1} t = \tilde\xi^{X,\ve}(\varphi_{\ve x,v,t})
 \end{equation}
For $x=0$ we have
 \begin{equation}
   \label{eq:81}
   \begin{split}
     j_{X^\ve}(0,v,\ve^{-1}t) - (v\sigma-\pi) \ve^{-1} t
     \ \tolaw\ \tilde\xi^X(\varphi_{0,v,t}),
   \end{split}
 \end{equation}
 which has variance
 \begin{align}
   \label{eq:83}
     &  \rho \iiint   \tilde r^2
       \bigl(1_{[\tilde v<v]} 1_{[0<\tilde  x<(v-\tilde v)t]} +
         1_{[\tilde v> v]}1_{[(v-\tilde v)t<\tilde x<0]}\bigr)
   d\tilde x \; d\mu(\tilde v,\tilde r)\\
   &=  \rho \int  \tilde r^2\int^v_{-\infty} (v- \tilde v)t d\mu(\tilde v, \tilde r) +
   \rho \int  \tilde r^2 \int^{\infty}_v (\tilde v-v)t d\mu(\tilde v,\tilde r)\\
   &= t \rho\iint  \tilde r^2 |v-\tilde v| d\mu(\tilde v,\tilde r) := t \mathcal D(v).
 \end{align}
 \blue{Notice that $\mathcal D(v)>0$ except for the trivial
   $\mu$ concentrated on a single velocity.}
}

 Computing the correlation for different initial position, assuming $x< \bar x$:
 \begin{align}
   \label{eq:84}
   & \mathbb E\bigl( \xi^{X,\ve}(\varphi_{\ve x,v,t}) \xi^{X,\ve}(\varphi_{\ve\bar x,v,t})\bigr)\\
   &=
     \rho \iiint   r^2  \bigl(1_{[v'<v]} 1_{[\ve x< x'<\ve x+(v-v')t]}
       - 1_{[v'>v]}1_{[\ve x+(v-v')t<x'<\ve x]}\bigr)\\
    &\qquad\times \bigl(1_{[v'<v]} 1_{[\ve \bar x< x'<\ve \bar x+(v-v')t]} -
       1_{[v'>v]}1_{[\ve\bar x+(v-v')t<x'<\ve \bar x]}\bigr)
     dx' \; d\mu(v',r)\\
     &=\rho \iiint   r^2  \Big(1_{[v'<v]} 1_{[\ve x< x'<\ve x+(v-v')t]}
     1_{[\ve \bar x< x'<\ve \bar x+(v-v')t]}\\
     & \qquad\qquad + 1_{[v'>v]}1_{[\ve x+(v-v')t<x'<\ve x]}1_{[\ve\bar x+(v-v')t<x'<\ve\bar x]}\Big)
       dx' \; d\mu(v',r)\\
      & = \rho \int r^2 \int_{-\infty}^{v-\ve (\bar x - x)/t}\bigl[ (v-v')t -\ve (\bar x - x)\bigr]
       d\mu(v',r)\\
      & \qquad\qquad+ \rho \int r^2 \int_{v+\ve (\bar x - x)/t}^{+\infty} (\ve(x-\bar x) -(v-v')t) d\mu(v',r)\\
    = & \rho \bigl[ vt -\ve (\bar x - x)\bigr] \int r^2 \int_{-\infty}^{v- \ve (\bar x - x)/t} d\mu(v',r)
     - \rho t \int r^2 \int_{-\infty}^{v-\ve (\bar x - x)/t} v' d\mu(v',r)\\
   & + \rho \bigl[ \ve(x -\bar x) - vt \bigr] \int r^2 \int_{v+\ve (\bar x - x)/t}^{+\infty} d\mu(v',r)
     + \rho t \int r^2 \int_{v+\ve (\bar x - x)/t}^{+\infty} v' d\mu(v',r)\\
       = &\rho \bigl[ vt -\ve (\bar x - x)\bigr] \int r^2 \int_{-\infty}^{v- \ve (\bar x - x)/t} d\mu(v',r)
       - \rho t \int r^2 \int_{-\infty}^{v- \ve(\bar x - x)/t} v' d\mu(v',r)\\
   &- \rho \bigl[ vt + \ve(\bar x - x) \bigr] \int r^2 \int_{v+\ve(\bar x - x)/t}^{+\infty} d\mu(v',r)
     + \rho t \int r^2 \int_{v+\ve(\bar x - x)/t}^{+\infty} v' d\mu(v',r) \\
       &\qquad
       \mathop{\longrightarrow}_{\ve \to 0} t \rho\iint  (r')^2 |v-v'| d\mu(v',r') = t \mathcal D(v).
 \end{align}
 It follows from \eqref{eq:84} that
 two tagged quasi-particles
 with the same velocity are asymptotically completely correlated
 {and
 \begin{equation}
   \label{eq:33}
   \lim_{\ve\to 0}
   \mathbb E\left( \left|\tilde\xi^{X,\ve}(\varphi_{\ve x,v,t}) - \tilde\xi^{X,\ve}(\varphi_{0,v,t})\right|^2\right)
   =0.
 \end{equation}
}
 Considering the correlation at different velocities $v< \bar v$:
 \begin{align}
   \label{eq:88}
   &\mathbb E\bigl( \xi^{X,\ve}(\varphi_{\ve x,v,t}) \xi^{X,\ve}(\varphi_{\ve x,\bar v,t})\bigr)\\
   &=
     \rho \iiint   r^2  \bigl(1_{[v'<v]} 1_{[\ve x< x'<\ve x+(v-v')t]}
       - 1_{[v'>v]}1_{[\ve x+(v-v')t<x'<\ve x]}\bigr)\\
     &\qquad\times\bigl(1_{[v'<\bar v]} 1_{[\ve x< x'< \ve x+(\bar v-v')t]} -
       1_{[v'>\bar v]}1_{[\ve x+(\bar v-v')t<x'<\ve x]}\bigr)
     \; dx' \; d\mu(v',r)\\
   &= \rho \iiint   r^2
     \Big(1_{[v'<v]} 1_{[\ve x< x'<\ve x+(v-v')t]}1_{[\ve x< x'< \ve x+(\bar v-v')t]}\\
   & \qquad\qquad\qquad+
     1_{[v'>\bar v]}1_{[\ve x+(\bar v-v')t<x'<\ve x]}1_{[\ve x+(v-v')t<x'<\ve x]}\\
   & \qquad\qquad\qquad
     - 1_{[v<v'<\bar v]}1_{[\ve x+(v-v')t<x'<\ve x]}1_{[\ve x< x'< \ve x+(\bar v-v')t]} \Big)
       \; dx' \; d\mu(v',r)\\
      &   = \rho \iiint   r^2  \Big(1_{[v'<v]} 1_{[\ve x< x'<\ve x+(v-v')t]}
       + 1_{[v'>\bar v]}1_{[\ve x+(\bar v-v')t<x'<\ve x]}\Big) \; dx' \; d\mu(v',r)\\
   & = t \rho \iint r^2 \bigl(1_{[v'<v]} (v-v') + 1_{[v'>\bar v]} (v' - \bar v)
     \bigr) d\mu(v',r)\\
       &=: t \Gamma(v,\bar v).
 \end{align}

 Noting that
 \begin{equation*}
   1_{[v'<v]} (v-v') + 1_{[v'>\bar v]} (v' - \bar v) =
   \frac 12 \bigl( |v-v'| + |v' - \bar v| - (\bar v - v)\bigr), 
 \end{equation*}
 we have that 
 \begin{equation}
   \label{eq:13}
   \Gamma(v,\bar v) =  \frac{1}{2}\Bigl( \mathcal D(v) + \mathcal D(\bar v)
   - (\bar v-v) \rho \iint r^2 d\mu(v',r)\Bigr).
 \end{equation}

 \begin{remark}\rm
\blue{The expression \eqref{eq:13} for the covariance corresponds to the L{\'e}vy
Chentsov field, see details in \cite{ffgs22}.}
   
  From \eqref{eq:88} we can see immediately that $\Gamma(v,\bar v) \ge 0$.
  In the particular case where there are only two velocities admitted,
  for example
  $d\mu(v,r) = \delta_a(dr) \frac 12 \left(\delta_{-1}(dv)+ \delta_1(dv)\right) $,
  we have $\mathcal D(\pm 1) = \rho a^2$ and $\Gamma(1, -1) = 0$.
  Typically this decorrelation
  happens only when two velocities at most are present.
\end{remark}

\subsection{Diffusive evolution of density fluctuations}
\label{sec:diff-evol-dens}

Define the fluctuation field at diffusive scaling as
\begin{equation}
   \label{21}
   \begin{split}
     \Xi^{Y,\ve}_t (\varphi) :=  \ve^{-1/2}\Bigl[\ve \sum_{(x,v,r)\in X^\ve}
       r\varphi\bigl[ y^\ve_{v,\ve^{-1}t}(x) - v^{\eff}(v) \ve^{-1} t , v, r\bigr]-
       \frac 1{1+\sigma} \lang \varphi \rang\Bigr].
   \end{split}
 \end{equation}
 Notice that this is recentered on the Euler characteristics. 

 Define \[\varphi_{\mathbb B_t}(y,v,r) := \varphi\bigl(y +B_t(v),v,r\bigr).\]
%\blue{Recall the operator $C$ defined in \eqref{17}.}
   The next theorem contains our main result.
\begin{theorem}
  \label{thm1}
 \begin{equation}
   \label{32}
   \Xi^{Y,\ve}_t (\varphi) \tolaw \  \Xi^{Y}_t (\varphi)
   = \Xi_0^{Y} \bigl(\varphi_{\mathbb B_t})\bigr) % =
   % \xi^X\bigl(C \ckphi_{{\mathbb B_t}}\bigr)
 \end{equation}
 where $\mathbb B_t = (B_t(v))_{v\in \R}$ is a family of Wiener processes with covariance
 \begin{equation}
   \label{eq:14}
   \bbE \bigl(B_t(v), B_t(w)\bigr) = {t \Gamma(v,w)}
 \end{equation}
 with $\Gamma$ defined in \eqref{eq:88}.
 Furthermore $\mathbb B_t$ is independent from $\Xi_0^{Y}$.
  More formally
  \begin{equation}
    \label{27}
    \begin{split}
    \Xi^{Y}_t (\varphi) &= \iiint r\varphi\bigl(y+ B_t(v),v,r\bigr)
    d\xi_0^Y(y,v,r)\\
    &=  \iiint r C\varphi\bigl((1+\sigma) x + B_t(v) ,v,r\bigr) d\xi_0^X(x,v,r)
  \end{split}
  \end{equation}
\end{theorem}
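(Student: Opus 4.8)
The plan is to run the Euler-scale argument of \autoref{sec:equil-stat-fluct} with $t$ replaced by $\ve^{-1}t$ and $v^{\eff}(v)t$ by $v^{\eff}(v)\ve^{-1}t$; the new phenomenon is that the collisional shift is now of order one and random, converging to the rigid shift $\sqrt{\mathcal D(v)}W_t(v)$ rather than to a deterministic constant. Writing $y=D^\ve_0(x)$ for the point of $X^\ve$ corresponding to $(y,v,r)\in Y^\ve$, and using \eqref{yet1} and $v^{\eff}(v)=v(1+\sigma)-\pi$,
\begin{multline*}
  y^\ve_{v,\ve^{-1}t}(x)-v^{\eff}(v)\ve^{-1}t=D^\ve_0(x)+J^\ve(x,v,t),\\
  J^\ve(x,v,t):=j^\ve(x,v,\ve^{-1}t)-(v\sigma-\pi)\ve^{-1}t,
\end{multline*}
so that $\Xi^{Y,\ve}_t(\varphi)=\ve^{-1/2}\bigl(K^\ve_{\mathrm d,t}\varphi-\tfrac1{1+\sigma}\lang\varphi\rang\bigr)$ with $K^\ve_{\mathrm d,t}\varphi:=\ve\sum_{(x,v,r)\in X^\ve}r\,\varphi(D^\ve_0(x)+J^\ve(x,v,t),v,r)$. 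Setting $A^\ve_{\mathrm d,t}\varphi:=\ve\sum_{(x,v,r)\in X^\ve}r\,\varphi(x(1+\sigma)+J^\ve(x,v,t),v,r)$, I would split, as in \eqref{117},
\begin{multline*}
  \Xi^{Y,\ve}_t(\varphi)=\ve^{-1/2}(K^\ve_{\mathrm d,t}\varphi-A^\ve_{\mathrm d,t}\varphi)+\ve^{-1/2}(A^\ve_{\mathrm d,t}\varphi-\EE A^\ve_{\mathrm d,t}\varphi)\\
  -\ve^{-1/2}(\EE K^\ve_{\mathrm d,t}\varphi-\EE A^\ve_{\mathrm d,t}\varphi)+\ve^{-1/2}\bigl(\EE K^\ve_{\mathrm d,t}\varphi-\tfrac1{1+\sigma}\lang\varphi\rang\bigr),
\end{multline*}
with $\ckphi_{W_t}(x,v,r):=\varphi((1+\sigma)x+\sqrt{\mathcal D(v)}W_t(v),v,r)$ the natural limiting test function.

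The facts about $J^\ve$ come from \eqref{80}--\eqref{eq:88}: $J^\ve(x,v,t)\tolaw\sqrt{\mathcal D(v)}W_t(v)$, the limit is $x$-independent (indeed \eqref{eq:84} gives $\EE(J^\ve(x,v,t)-J^\ve(\bar x,v,t))^2=O(\ve|x-\bar x|)$, hence uniform convergence in $x$ on compacts), and $(W_t(v))_{v\in\R}$ is the jointly Gaussian L\'evy--Chentsov field with covariance \eqref{eq:14}. What must be added is that $(W_t(v))_v$ is asymptotically independent of the static field $\xi^X$: for $\varphi$ of bounded support $\Cov(\xi^{X,\ve}(\varphi),J^\ve(x,v,t))=O(\sqrt\ve)$, since the $O(\ve^{-1})$-long window carrying $J^\ve$ meets the $O(1)$-window of $\varphi$ on a vanishing fraction, and uncorrelatedness upgrades to independence in the Gaussian limit.

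The last two terms of the decomposition are $R^\ve$: for the third, Taylor expansion around $x(1+\sigma)$ and the Slyvniak-Mecke formula (as in \eqref{120}--\eqref{124}) reduce it to $\ve^{-1/2}\rho\iiint r\,\EE[(\partial_y\varphi)(x(1+\sigma)+J^\ve)(m_0^x(X^\ve)-x\sigma)]\,dx\,d\mu+R^\ve$, which is $R^\ve$ because $m_0^x-x\sigma$ is centered and $\Cov(J^\ve(x,v,t),m_0^x(X^\ve))=O(\ve)$ on the support of $\varphi$; for the fourth, Slyvniak-Mecke together with the translation invariance of $X^\ve$ (a constant $y$-shift of $\varphi$ being a space translation, as in \eqref{115k}) gives $\EE K^\ve_{\mathrm d,t}\varphi=\tfrac1{1+\sigma}\lang\varphi\rang+R^\ve$. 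For the second term, the uniformity in $x$ lets one replace $J^\ve(\cdot,v,t)$ by its $x$-independent value $\sqrt{\mathcal D(v)}W^\ve_t(v)$, turning $A^\ve_{\mathrm d,t}\varphi$ into, up to $R^\ve$, the Poisson sum $\ve\sum_{(x,v,r)\in X^\ve}r\,\varphi((1+\sigma)x+\sqrt{\mathcal D(v)}W^\ve_t(v),v,r)$; its conditional mean equals the unconditional one $\tfrac1{1+\sigma}\lang\varphi\rang$ by translation invariance, so the central limit theorem \eqref{xclts} applied conditionally on $W^\ve_t$, Lemma \ref{lemmaPP2} and the asymptotic independence give $\ve^{-1/2}(A^\ve_{\mathrm d,t}\varphi-\EE A^\ve_{\mathrm d,t}\varphi)\tolaw\xi^X(\ckphi_{W_t})$ with $\xi^X\perp W_t$. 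The first term, by Taylor expansion, is $\ve^{1/2}\sum_{(x,v,r)\in X^\ve}r\,(\partial_y\varphi)(x(1+\sigma)+J^\ve(x,v,t),v,r)(m_0^x(X^\ve)-x\sigma)+R^\ve$; using $B^\ve$ of \eqref{13s}, the identity $(\partial_y\varphi)(x(1+\sigma)+J^\ve)=\tfrac1{1+\sigma}\partial_x\ckphi_{W_t}+R^\ve$, Lemma \ref{LemmaPP} conditionally on $W_t$, and the integration by parts of \eqref{131}--\eqref{135}, one obtains the limit $-\xi^X(P\ckphi_{W_t})$. Summing, $\Xi^{Y,\ve}_t(\varphi)\tolaw\xi^X(\ckphi_{W_t})-\xi^X(P\ckphi_{W_t})=\xi^X(C\ckphi_{W_t})$, and \eqref{17} applied conditionally on $W_t$ identifies this with $\xi^Y(\varphi_{W_t})=\iiint r\,\varphi(y+\sqrt{\mathcal D(v)}W_t(v),v,r)\,d\xi_0^Y(y,v,r)$, which is \eqref{32}--\eqref{27}; joint convergence in finitely many $\varphi$ follows by linearity of $\Xi^{Y,\ve}_t$ and the Cram\'er--Wold device.

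The hard part is justifying the conditional steps above and, in particular, controlling the joint behaviour of the four sources of fluctuation — the static density field, the dilation increments $B^\ve$, the collisional shift $J^\ve(0,v,t)$, and the sub-leading $x$-dependence of $J^\ve(x,v,t)$ — which are all functionals of one and the same Poisson realization and interact through cancellations. Concretely, one must show that the $O(\sqrt\ve)$ $x$-dependent part of the collisional shift, being of the same order as the dilation fluctuation that produces the $-P$ term, does not alter the final form (exactly as, in the Euler scaling, the collisional-shift fluctuation leaves \eqref{17t} a pure transport); microscopically this is the statement that two same-velocity quasi-particles at macroscopic distance become completely correlated, so that only the common shift $\sqrt{\mathcal D(v)}W_t(v)$ survives in the limit. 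The separation of scales — $\ve^{-1}$-long windows carrying the shift against $O(1)$-windows carrying the density — is what makes these couplings decouple, and Lemmas \ref{LemmaPP}--\ref{lemmaPP2} are designed to package it; carrying the random test function $\ckphi_{W_t}$, with its $v$-dependence through $W_t(v)$, through the Poisson central limit theorem is the delicate point.
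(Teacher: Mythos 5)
Your proposal follows essentially the same route as the paper's proof: the same Taylor expansion of $\varphi$ in the dilation increment $m_0^x(X^\ve)-\sigma x$ around $x(1+\sigma)+J^\ve$, Lemma \ref{lemmaPP2} for the leading term giving $\xi^X(\ckphi_{W_t})$, Lemmas \ref{LemmaPP}--\ref{lemmaPP2} plus the integration by parts of \eqref{131}--\eqref{135} for the derivative term giving $-\xi^X(P\ckphi_{W_t})$, and Slyvniak--Mecke for the negligibility of the recentering terms. Your added remarks on the $x$-uniformity of $J^\ve$ and the asymptotic independence of the shift from the static field make explicit what the paper leaves implicit in its appeal to Lemma \ref{lemmaPP2}, but the argument is the same.
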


  We prove this theorem after some comments. By \eqref{27}, $\Xi^{Y}_t$ solves the stochastic differential equation
  \begin{equation}
    \label{33}
    d \Xi^{Y}_t(\varphi) = \frac 1{2}  \Xi^{Y}_t(\mathcal D \partial_y^2 \varphi) dt
    - \iiint  (\partial_y \varphi)(y,v,r) dB_t(v) d\Xi^Y_t(y,v,r)
  \end{equation}
  or in the time integrated form:
  \begin{align}
    \label{33i}
   % \begin{split}
    \Xi^{Y}_t(\varphi) &= \Xi^{Y}_0(\varphi)+  \int_0^t
    \frac 1{2}  \Xi^{Y}_s(\mathcal D \partial_y^2 \varphi) ds\\
    &\qquad\qquad- \int_0^t \iiint  (\partial_y \varphi)(y,v,r)
    dB_s(v) d\Xi^Y_s(y,v,r)\\
    &= \Xi^{Y}_0(\varphi)+  \int_0^t
      \frac 1{2}  \Xi^{Y}_s(\mathcal D \partial_y^2 \varphi) ds-
      \int_0^t \Xi^Y_s\bigl(\;  \partial_y \varphi\;  dB_s\bigr),
  %\end{split}
  \end{align}
  where the last term is a martingale with quadratic variation
  \begin{align}
    \label{38}
    \int_0^t
    \Bigl(\iiint \sqrt{\mathcal D(v)} (\partial_y \varphi)(y,v,r)
      d\Xi^Y_s(y,v,r)\Bigr)^2 ds
   = \int_0^t \Xi^Y_s\bigl(\sqrt{\mathcal D} (\partial_y \varphi)\bigr)^2 ds .
  \end{align}
% \so{\sout{Notice that}
% \begin{equation}
%    \label{34}
%     \begin{split}
%       \mathbb E\bigl(\Xi^{Y,\ve}_t (\varphi)^2\bigr)&=
%       \frac{\rho}{1+\sigma}
%       \iiint r^2 \mathbb E\bigl[(C\varphi)^2(y +B_t(v),v,r)\bigr]
%       dy d\mu(v,r)\\
%       &=  \frac{\rho}{1+\sigma}
%       \iiint r^2 \int (C\varphi)^2\bigl(y + z,v,r\bigr)
%       \frac{e^{-\frac{z^2}{2t\mathcal D(v)}}}{\sqrt{2\pi t \mathcal D(v)}} \;dz \; dy \; d\mu(v,r)\\
%      &=  \frac{\rho}{1+\sigma} \iiint r^2  (C\varphi)^2 \bigl(y,v,r\bigr) dy\; d\mu(v,r),
%     \end{split}
%   \end{equation}
%  \sout{ independent of $t$, in agreement with the stationarity of the process.}}

  The expectation of the quadratic variation \eqref{38} is given by
  \begin{equation}
    \label{39}
    t  \frac{\rho}{1+\sigma}\iiint \mathcal D(v) r^2 (\partial_y \varphi)^2(y,v,r)
    dy \; d\mu(v,r).
  \end{equation}

  \begin{proof}[Proof of Theorem \ref{thm1}]

    % In this proof we work on the microscopic scale, i.e. we consider $X$ as a Poisson process with
    % intensity measure $dx \; d\mu(v,r)$. Then the recentered fluctuation field on the diffusive time scale
    % defined by \eqref{21} can be written as
    % \begin{equation}
    %   \label{eq:45}
    %   \begin{split}
    %       \Xi^{Y,\ve}_t (\varphi) :=  \ve^{-1/2}\Bigl[\ve \sum_{(x,v,r)\in X}
    %    r\varphi\bigl[ \ve x + \ve m_0^x(X) + \ve j^\ve(\ve x,v,\ve^{-1}t) - \sigma v t , v, r\bigr]-
    %    \frac 1{1+\sigma} \lang \varphi \rang\Bigr],
    %   \end{split}
    % \end{equation}
    % where 
    % \begin{equation}
    %   \label{eq:48}
    %    m_0^x(X) = \sum_{(x',v',r')\in X}  r\bigl(1_{[ x'\in [0,x)]} - 1_{[ x'\in [x,0]]}\bigr).
    % \end{equation}
    In order to simplify notation we set $\pi = 0$, the extension to general $\pi\neq 0$ is straightforward.
    Denote
    \begin{equation}
      \label{eq:54}
      B_t^\ve(x,v) =  j^\ve(x,v,\ve^{-1}t)  - \sigma v \ve^{-1} t, \qquad
      {\mathbb B}^\ve_t = {\mathbb B}^\ve_t(X^\ve)= \{B_t^\ve(x,v), (x,v)\in \mathbb R^2\}.
    \end{equation}
    % \sout{\so{This defines a family of random walks
    %     ${\mathbb B}^\ve_t = {\mathbb B}^\ve_t(X^\ve)= \{B_t^\ve(\ve x,v), (x,v,r)\in X^\ve\}$.
    % Notice that by \eqref{82} $B_t^\ve$ is a function of $\ve x$ (i.e. it varies on that scale). }} 

    By \eqref{80j} ${\mathbb B}^\ve_t$ converges in law to the field of Brownian motions
    \begin{equation}
      \label{eq:51}
      {\mathbb B}^\ve_t \tolaw  {\mathbb B}_t = (B_t(v))_{v\in\R}.
\end{equation}    
   Then we can rewrite 
\begin{equation}
  \label{eq:49}
  \begin{split}
    & \Xi^{Y,\ve}_t (\varphi) = 
   \ve^{-1/2}\Bigl[\ve \sum_{(x,v,r)\in X^\ve} r
  \varphi\bigl[ x(1+\sigma) + B_t^\ve( x,v) , v, r\bigr]-
  \frac 1{1+\sigma} \lang \varphi \rang\Bigr]\\
  &+ \ve \sum_{(x,v,r)\in X^\ve} r
 (\partial_y \varphi)\bigl[ x(1+\sigma) + B_t^\ve( x,v) , v, r\bigr]
 \ve^{-1/2} \left(m_{0}^{x}(X_{0}^\ve) - \sigma x\right) + R^\ve_t,
     \end{split}
   \end{equation}
   where $R^\ve_t\to 0$ in law.
   
   % By Lemma \ref{lemmaPP2},
   We deal with the first term on the RHS of \eqref{eq:49}.
   
   By \eqref{xclts} and \eqref{eq:51},
   the couple $(\xi^{X,\ve},  {\mathbb B}^\ve_t)$ converges jointly in law
   to $(\xi^{X},  {\mathbb B}_t) $. By \eqref{eq:30} they are independent.
   By the mapping theorem (\cite{zbMATH04060392}, chapter 1, section 2)
   we have that the first term on the RHS of \eqref{eq:49} converges in law to
  { $ \xi^X(\widetilde\varphi_{t})$} where
   \begin{equation}
     \label{eq:50}
    { \widetilde\varphi_{t}} (x,v,r) = \varphi(x(1+\sigma)+B_t(v),v,r).
     \end{equation}
   About the second term of \eqref{eq:49}, %  this is equal to
 %   \begin{equation}
 %     \label{eq:55}
 %     \begin{split}
 %      &  \ve \sum_{(x,v,r)\in X^\ve} r
 % (\partial_y \varphi)\bigl[ x(1+\sigma) + B_t^\ve(\ve x,v) , v, r\bigr]
 % \ve^{-1/2} \left(m_{0}^{x}(X^\ve) - \sigma x\right)\\
 %     \end{split}
 %   \end{equation}
 %   and
   combining the argument above and the one used in \eqref{131s} to \eqref{135s}
   this converges in law to $- \xi^X(P \widetilde\varphi_{t})$,
   \blue{where $P$ is
   the operator defined in \eqref{15s}, i.e.
   \begin{equation}
     \label{eq:35}
     P \widetilde\varphi_{t}(x) = \frac{\rho}{1+\sigma}
     \iint r  \widetilde\varphi_{t} (x,v,r) d\mu(v,r) =
      \frac{\rho}{1+\sigma}
     \iint r \varphi(x(1+\sigma)+B_t(v),v,r) d\mu(v,r). 
   \end{equation}
   Then by \eqref{17}, recalling that $C= I - P$, and the above we have that
   \begin{equation}
     \label{eq:36}
     \Xi^{Y}_t (\varphi) := \lim_{\ve\to 0} \Xi^{Y,\ve}_t (\varphi) =
     \xi^X(\widetilde\varphi_{t}) -  \xi^X(P \widetilde\varphi_{t})
     = \xi^X(C\widetilde\varphi_{t}) = \xi^Y(\varphi_{\mathbb B_t}).
   \end{equation}
   }
  \end{proof}

     Formally, choosing $\varphi_{k,w}(x,v,r) = e^{i2\pi xk} \varphi(r) \delta(v-w)$,
     we have that
     \begin{align}
       \label{41}
       \hat\varphi(k,w,t) :=\Xi^{Y}_t (\varphi_{k,w}) &= \int d\xi^Y(y,r,v)
       e^{i2\pi k (y + B_t(w))} \varphi(r) \delta(v-w)\\
       &= \xi^Y\Bigl(e^{i2\pi k (\cdot + B_t(w))} \varphi(\cdot) \delta(\cdot-w)\Bigr)
     \end{align}
      satisfies the SDE
      \begin{equation}
        \label{42}
        \begin{split}
          d  \hat\varphi(k,w,t) = - \frac{(2\pi k)^2}{2}
          \mathcal D(w) \hat\varphi(k,w,t)
          + {i2\pi k} \hat\varphi(k,w,t) dB_t(w).
        \end{split}
      \end{equation}
      Notice that $|\hat\varphi(k,w,t)|^2 = |\hat\varphi(k,w,0)|^2$
      for any $k$, a persistence on the macroscopic scale of the complete integrability
      of the dynamics also at the level of these fluctuations.

      \begin{remark}
\rm        Since we consider also systems where lengths $r$ can be negative,
in the case that $\sigma = 0$ and $\pi = 0$ the macroscopic evolution
of fluctuations in the Euler scaling is the same as the independent point particles.
But in the diffusive scaling the fluctuations have non trivial behaviour.
      \end{remark}

\section{A remark about inhomogeneous
initial distribution}
\label{sec:inhom-non-stat}

Let $f_0(x,v,r)$ be a nice non-negative bounded function on $\mathbb R^3$  and
$X^\ve$ the Poisson process on $\mathbb R^3$ with intensity
$\ve^{-1} f_0(x,v,r) dx \;dv\; dr$.

In the Euler scaling, the empirical distribution of the free gas converges to the solution of
\begin{equation}
  \label{eq:15}
  \partial_t f_t(x,v,r) + v\partial_x f_t(x,v,r) = 0,
\end{equation}
with initial condition given by $f_0$.
For the rods density this corresponds to the equation
\begin{equation}
  \label{eq:16}
  \begin{split}
    \partial_t g_t(y,v,r) + \partial_y \bigl(v^{\eff}(y,v,t) g_t(y,v,r) \bigr) = 0,\\
    v^{\eff}(y,v,t) = v + \frac{\iint r (v-w) g_t(y,w,r) dw dr}{1 - \iint r g_t(y,w,r) dw dr}.
\end{split}
\end{equation}
as proven in \cite{ffgs22}.

For \emph{generic} initial conditions, we can guess that, if
the initial density $f$ is absolutely continuous in the $x$ and $v$ coordinates, then it satisfies that the limit as $t\to\infty$ of
$f_t(x,v,r)$ is constant in $x$, that is,  $f_t(x,v,r) \longrightarrow \rho \bar f(v,r)$,
for some $\rho\in \R_+$ and $\bar f(v,r)$.
This suggests that in a diffusive time scale the system essentially behaves like if it is
a stationary state determined by a Poisson point field $\rho \bar f(v,r) dx\; dv\; dr$,
and the analysis for the macroscopic fluctuations of \autoref{sec:equil-fluct-diff}
applies.

\section{A limit theorem for Poisson process}
\label{sec:lemmaspp}

%\so{Recall the definition \eqref{13s} for $B^\ve(x)$.}

\begin{lemma}\label{LemmaPP}
Let $\varphi(x,v,r)$ a smooth function on $\mathbb R^3$
with compact support in $x$, {and $\{B^\ve(x), x\in \mathbb R\}$ a process
converging in law to $\{B(x), x\in \mathbb R\}$.} % defined by \eqref{13s}.
Then
\begin{equation}
  \label{eq:1}
  \lim_{\ve\to 0} \ve \sum_{(x,v,r)\in X^\ve} r \varphi(x,v,r) B^\ve(x) \eqlaw
  \rho \iiint r \varphi(x,v,r) B(x) \; dx\;
  d\mu(v,r).
\end{equation}
\end{lemma}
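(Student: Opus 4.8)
The plan is to prove joint convergence in law of the pair $\bigl(X^\ve,\, B^\ve(\cdot)\bigr)$ — where $X^\ve$ is viewed as a point measure and $B^\ve$ as the c\`adl\`ag path $x\mapsto \ve^{-1/2}(m_0^x(X^\ve)-x\sigma)$ — and then deduce convergence of the integral functional by a continuous mapping argument. Since the left-hand side is linear in $X^\ve$ (it is a sum over the points) and $B^\ve(x)$ is itself a functional of $X^\ve$, the two factors are \emph{not} independent; the argument must keep track of the correlation. First I would fix a large $T$ so that $\varphi$ is supported in $\{|x|\le T\}$ and discretize the integration variable $x$ on a mesh of size $h$: write the sum over points of $X^\ve$ as $\sum_{j}\ \ve\!\!\sum_{(x,v,r)\in X^\ve,\ x\in I_j} r\,\varphi(x,v,r)\,B^\ve(x)$ over the intervals $I_j$ partitioning $[-T,T]$. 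On each $I_j$, approximate $B^\ve(x)$ by $B^\ve(x_j)$ at the left endpoint; the error is controlled because $B^\ve$ converges to a Brownian motion (tightness of $B^\ve$ in the Skorokhod space, from the functional CLT \eqref{xclts} for the Poisson field, gives equicontinuity estimates) and because $\varphi$ is bounded with compact support, so the error term is $R^\ve$ uniformly small as $h\to0$, uniformly in $\ve$.

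Next, for fixed mesh, I would analyze $\sum_j B^\ve(x_j)\cdot \ve\!\!\sum_{(x,v,r)\in X^\ve,\ x\in I_j} r\,\varphi(x,v,r)$. By the law of large numbers \eqref{xlln} for the Poisson field restricted to each slab $I_j\times\R^2$, the inner sum $\ve\sum_{x\in I_j} r\,\varphi(x,v,r)$ converges almost surely to $\rho\iint_{I_j} r\,\varphi(x,v,r)\,dx\,d\mu(v,r)$; moreover this convergence is \emph{joint} with the convergence of $B^\ve$ to $B$, because $B^\ve(x_j)$ depends (up to negligible boundary corrections) on the points of $X^\ve$ in $[0,x_j]$ while the slab mass fluctuations have variance of order $\ve$ and hence vanish — only the deterministic limit of the slab mass survives and it pairs with the limiting path $B$. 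Thus for each fixed $h$,
\begin{equation}
  \ve\!\!\sum_{(x,v,r)\in X^\ve} r\,\varphi(x,v,r)\,B^\ve(x)
  \ \tolaw\ \sum_j B(x_j)\,\rho\iint_{I_j} r\,\varphi(x,v,r)\,dx\,d\mu(v,r),
\end{equation}
and letting $h\to0$ the right-hand side converges in $L^2$ to $\rho\iiint r\,\varphi(x,v,r)\,B(x)\,dx\,d\mu(v,r)$ by continuity of the Brownian path and dominated convergence. A standard $3\varepsilon$-argument interchanging the limits $\ve\to0$ and $h\to0$ (justified by the uniform-in-$\ve$ smallness of the discretization error established in the first step) completes the proof.

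The main obstacle is the joint-convergence bookkeeping in the second step: one must argue carefully that the fluctuations of the slab masses are asymptotically negligible relative to their means, so that in the limit the ``$B^\ve$ part'' and the ``mass part'' decouple into (random limiting path) $\times$ (deterministic density), rather than producing an extra cross term. This is where the scaling $\ve^{-1/2}$ in $B^\ve$ versus the $O(\ve^{1/2})$ size of a slab-mass fluctuation is used: their product is $O(\ve^{1/2})\to0$. A clean way to make this rigorous is to replace the inner random sum by its conditional mean given $\{X^\ve\cap[0,x_j]\}$ is not even needed — instead one bounds $\mathbb E\bigl|\ve\sum_{x\in I_j}r\varphi - \rho\iint_{I_j}r\varphi\bigr|^2 = O(\ve)$ directly from the Poisson variance formula \eqref{36}-\eqref{37}, multiplies by the $O(1)$ second moment of $B^\ve(x_j)$, sums over the $O(1/h)$ slabs, and concludes the cross-contribution is $O(\ve/h)\to0$ for fixed $h$. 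Tightness of the full sequence follows from the second-moment bound \eqref{37} applied to $\varphi\,B^\ve$ together with tightness of $B^\ve$, so convergence of finite-dimensional distributions (via characteristic functions of the Poisson field, conditioning as above) upgrades to convergence in law.
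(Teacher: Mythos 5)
Your argument is correct, and it rests on the same key mechanism as the paper's proof: the empirical (length-biased) Poisson mass converges to a \emph{deterministic} limit, so its $O(\ve^{1/2})$ fluctuations are negligible against the $O(1)$ field $B^\ve$, and the pair decouples in the limit (a Slutsky-type statement), after which only a continuity argument is needed. The implementation differs, though. The paper first reduces to indicator test functions, replaces $B^\ve$ by a continuous interpolation, views the weighted empirical measure $M^\ve$ of \eqref{eq:20} as a random element of $\mathcal M_+([0,1])$ converging a.s.\ to Lebesgue measure, invokes the Slutsky generalization (van der Vaart, Thm.~2.7(v)) to get joint convergence of $(M^\ve,\tilde B^\ve)$ to $(dx,B)$, and concludes by continuous mapping with $F(\mu,\psi)=\int\psi\,d\mu$ as in \eqref{eq:21}. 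You instead discretize in $x$, freeze $B^\ve$ at mesh points, kill the cross term by the moment bound \eqref{37} (note: Cauchy--Schwarz gives $O(\ve^{1/2}/h)$ per fixed mesh, not $O(\ve/h)$, but the conclusion is unchanged), and interchange the limits $\ve\to0$, $h\to0$ by a uniform control of the discretization error. Your route is more elementary and self-contained (only the LLN \eqref{xlln}, the CLT \eqref{xclts} and explicit variance bounds), at the price of more bookkeeping: in particular the ``equicontinuity'' you extract from Skorokhod tightness should be stated carefully (tightness controls the modified modulus $w'$; one upgrades to the uniform modulus because the jumps of $B^\ve$ are asymptotically uniformly small and the limit $B$ is continuous), which is precisely the step the paper's continuous-interpolation-plus-continuous-mapping argument packages more cleanly. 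Neither point is a genuine gap.
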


\begin{proof}
   Since $\varphi$ is a smooth function we can approximate both sides by
  step functions, so that it is enough to prove that
  \begin{equation}
    \label{eq:2}
     \lim_{\ve\to 0} \ve \sum_{(x,v,r)\in X^\ve} 1_{[0\le x<1]} B^\ve(x) \eqlaw
  \rho \int_0^1 B(x) \; dx.
\end{equation}
Since in this limit $(v,r)$ are not involved, to simplify notation we will ignore them.
% \sout{\so{We can also substitute $B^\ve(\cdot)$ with a continuous version $\tilde B^\ve(\cdot)$
% obtained by linear interpolation that converges in law to the same limit.}}
The positive random measure $M^\ve $ on $[0,1]$ defined by
\begin{equation}
  \label{eq:20}
  M^\ve (\psi) =  \ve \sum_{(x,v,r)\in X^\ve} 1_{[0\le x<1]} \psi(x) , \qquad \psi\in \mathcal C(0,1),
\end{equation}
converges a.s. to the Lebesgue measure $dx$ on $[0,1]$.
Then, by the generalization of Slutsky's theorem in Theorem 2.7 (v) in \cite{vandervaart}, 
the couple $(M^\ve, B^\ve)$ converges in law to $(dx, B)$.

Let $F(\mu, \psi)$ be a continuous function on $\mathcal M_+([0,1])\times \mathcal C(0,1)$.
Then $F(M^\ve,  B^\ve) \longrightarrow F(dx, B)$ in law. Apply this to
\begin{equation}
  \label{eq:21}
  F(\mu, \psi) = \int_0^1 \psi(x) d\mu(x),
\end{equation}
and \eqref{eq:2} follows.
\end{proof}

\bibliographystyle{imsart-number} % Style BST file (imsart-number.bst or imsart-nameyear.bst)
\bibliography{hardrods}       % Bibliography file (usually '*.bib')

\begin{thebibliography}{24}
% BibTex style file: imsart-number.bst, 2017-11-03
% Default style options (sort=1,type=number).
% Used options (sort=1,type=number).

\bibitem{ALM75}
\begin{barticle}[author]
\bauthor{\bsnm{Aizenman},~\bfnm{Michael}\binits{M.}},
  \bauthor{\bsnm{Goldstein},~\bfnm{Sheldon}\binits{S.}} \AND
  \bauthor{\bsnm{Lebowitz},~\bfnm{Joel~L.}\binits{J.~L.}}
(\byear{1975}).
\btitle{Ergodic properties of an infinite one dimensional hard rod system}.
\bjournal{Communications in Mathematical Physics}
\bvolume{39}
\bpages{289--301}.
\bdoi{10.1007/BF01705376}
\end{barticle}
\endbibitem

\bibitem{ALM78}
\begin{barticle}[author]
\bauthor{\bsnm{Aizenman},~\bfnm{Michael}\binits{M.}},
  \bauthor{\bsnm{Lebowitz},~\bfnm{Joel}\binits{J.}} \AND
  \bauthor{\bsnm{Marro},~\bfnm{Joaquin}\binits{J.}}
(\byear{1978}).
\btitle{Time-displaced correlation functions in an infinite one-dimensional
  mixture of hard rods with different diameters}.
\bjournal{J. Statist. Phys.}
\bvolume{18}
\bpages{179--190}.
\bdoi{10.1007/BF01014309}
\bmrnumber{469005}
\end{barticle}
\endbibitem

\bibitem{zbMATH04060392}
\begin{bbook}[author]
\bauthor{\bsnm{Billingsley},~\bfnm{Patrick}\binits{P.}}
(\byear{1986}).
\btitle{Probability and measure. 2nd ed}.
\bseries{Wiley Ser. Probab. Math. Stat.}
\bpublisher{John Wiley \& Sons, Hoboken, NJ}.
\end{bbook}
\endbibitem

\bibitem{bds-1990-NS}
\begin{btechreport}[author]
\bauthor{\bsnm{Boldrighini},~\bfnm{C}\binits{C.}},
  \bauthor{\bsnm{Dobrushin},~\bfnm{RL}\binits{R.}} \AND
  \bauthor{\bsnm{Suhov},~\bfnm{Y~M}\binits{Y.~M.}}
(\byear{1990}).
\btitle{One-Dimensional Hard-Rod Caricature of Hydrodynamics: {N}avier-{S}tokes
  Correction}
\btype{Technical Report},
\bpublisher{Dublin Institute for Advances Studies}.
\bnote{Preprint}.
\end{btechreport}
\endbibitem

\bibitem{bds}
\begin{barticle}[author]
\bauthor{\bsnm{Boldrighini},~\bfnm{C.}\binits{C.}},
  \bauthor{\bsnm{Dobrushin},~\bfnm{R.~L.}\binits{R.~L.}} \AND
  \bauthor{\bsnm{Suhov},~\bfnm{Yu.~M.}\binits{Y.~M.}}
(\byear{1983}).
\btitle{One-dimensional hard rod caricature of hydrodynamics}.
\bjournal{J. Statist. Phys.}
\bvolume{31}
\bpages{577--616}.
\bdoi{10.1007/BF01019499}
\bmrnumber{711490}
\end{barticle}
\endbibitem

\bibitem{MR1480034}
\begin{barticle}[author]
\bauthor{\bsnm{Boldrighini},~\bfnm{C.}\binits{C.}} \AND
  \bauthor{\bsnm{Suhov},~\bfnm{Y.~M.}\binits{Y.~M.}}
(\byear{1997}).
\btitle{One-dimensional hard-rod caricature of hydrodynamics:
  ``{N}avier-{S}tokes correction'' for local equilibrium initial states}.
\bjournal{Comm. Math. Phys.}
\bvolume{189}
\bpages{577--590}.
\bdoi{10.1007/s002200050218}
\bmrnumber{1480034}
\end{barticle}
\endbibitem

\bibitem{boldrighini-wick}
\begin{barticle}[author]
\bauthor{\bsnm{Boldrighini},~\bfnm{C.}\binits{C.}} \AND
  \bauthor{\bsnm{Wick},~\bfnm{W.~David}\binits{W.~D.}}
(\byear{1988}).
\btitle{Fluctuations in a one-dimensional mechanical system. {I}. {T}he {E}uler
  limit}.
\bjournal{J. Statist. Phys.}
\bvolume{52}
\bpages{1069--1095}.
\bdoi{10.1007/BF01019740}
\bmrnumber{968969}
\end{barticle}
\endbibitem

\bibitem{cardy-doyon}
\begin{barticle}[author]
\bauthor{\bsnm{Cardy},~\bfnm{John}\binits{J.}} \AND
  \bauthor{\bsnm{Doyon},~\bfnm{Benjamin}\binits{B.}}
(\byear{2022}).
\btitle{{$T\overline T$} deformations and the width of fundamental particles}.
\bjournal{J. High Energy Phys.}
\bvolume{4}
\bpages{Paper No. 136, 26}.
\bdoi{10.1007/jhep04(2022)136}
\bmrnumber{4430160}
\end{barticle}
\endbibitem

\bibitem{Doyon-SciPostPhys2019}
\begin{barticle}[author]
\bauthor{\bsnm{De~Nardis},~\bfnm{Jacopo}\binits{J.}},
  \bauthor{\bsnm{Bernard},~\bfnm{Denis}\binits{D.}} \AND
  \bauthor{\bsnm{Doyon},~\bfnm{Benjamin}\binits{B.}}
(\byear{2019}).
\btitle{{Diffusion in generalized hydrodynamics and quasiparticle scattering}}.
\bjournal{SciPost Phys.}
\bvolume{6}
\bpages{049}.
\bdoi{10.21468/SciPostPhys.6.4.049}
\end{barticle}
\endbibitem

\bibitem{DeNardis_2022}
\begin{barticle}[author]
\bauthor{\bsnm{De~Nardis},~\bfnm{Jacopo}\binits{J.}},
  \bauthor{\bsnm{Doyon},~\bfnm{Benjamin}\binits{B.}},
  \bauthor{\bsnm{Medenjak},~\bfnm{Marko}\binits{M.}} \AND
  \bauthor{\bsnm{Panfil},~\bfnm{Miłosz}\binits{M.}}
(\byear{2022}).
\btitle{Correlation functions and transport coefficients in generalised
  hydrodynamics}.
\bjournal{Journal of Statistical Mechanics: Theory and Experiment}
\bvolume{2022}
\bpages{014002}.
\bdoi{10.1088/1742-5468/ac3658}
\end{barticle}
\endbibitem

\bibitem{ds}
\begin{barticle}[author]
\bauthor{\bsnm{Doyon},~\bfnm{Benjamin}\binits{B.}} \AND
  \bauthor{\bsnm{Spohn},~\bfnm{Herbert}\binits{H.}}
(\byear{2017}).
\btitle{Dynamics of hard rods with initial domain wall state}.
\bjournal{J. Stat. Mech. Theory Exp.}
\bvolume{7}
\bpages{073210, 21}.
\bdoi{10.1088/1742-5468/aa7abf}
\bmrnumber{3683812}
\end{barticle}
\endbibitem

\bibitem{ffgs22}
\begin{bbook}[author]
\bauthor{\bsnm{Ferrari},~\bfnm{Pablo~A.}\binits{P.~A.}},
  \bauthor{\bsnm{Franceschini},~\bfnm{Chiara}\binits{C.}},
  \bauthor{\bsnm{Grevino},~\bfnm{Dante G.~E.}\binits{D.~G.~E.}} \AND
  \bauthor{\bsnm{Spohn},~\bfnm{Herbert}\binits{H.}}
(\byear{2023}).
\btitle{Hard rod hydrodynamics and the {L}{\'e}vy {C}hentsov field}.
\bseries{Ensaios Mat.}
\bvolume{38}.
\bpublisher{Rio de Janeiro: Sociedade Brasileira de Matem{\'a}tica}.
\bdoi{10.21711/217504322023/em387}
\bmrnumber{4772242}
\end{bbook}
\endbibitem

\bibitem{fnrw21}
\begin{barticle}[author]
\bauthor{\bsnm{Ferrari},~\bfnm{Pablo~A.}\binits{P.~A.}},
  \bauthor{\bsnm{Nguyen},~\bfnm{Chi}\binits{C.}},
  \bauthor{\bsnm{Rolla},~\bfnm{Leonardo~T.}\binits{L.~T.}} \AND
  \bauthor{\bsnm{Wang},~\bfnm{Minmin}\binits{M.}}
(\byear{2021}).
\btitle{Soliton decomposition of the box-ball system}.
\bjournal{Forum Math. Sigma}
\bvolume{9}
\bpages{Paper No. e60, 37}.
\bdoi{10.1017/fms.2021.49}
\bmrnumber{4308823}
\end{barticle}
\endbibitem

\bibitem{huse-PRB.2018}
\begin{barticle}[author]
\bauthor{\bsnm{Gopalakrishnan},~\bfnm{Sarang}\binits{S.}},
  \bauthor{\bsnm{Huse},~\bfnm{David~A.}\binits{D.~A.}},
  \bauthor{\bsnm{Khemani},~\bfnm{Vedika}\binits{V.}} \AND
  \bauthor{\bsnm{Vasseur},~\bfnm{Romain}\binits{R.}}
(\byear{2018}).
\btitle{Hydrodynamics of operator spreading and quasiparticle diffusion in
  interacting integrable systems}.
\bjournal{Phys. Rev. B}
\bvolume{98}
\bpages{220303}.
\bdoi{10.1103/PhysRevB.98.220303}
\end{barticle}
\endbibitem

\bibitem{Harris71}
\begin{barticle}[author]
\bauthor{\bsnm{Harris},~\bfnm{T.~E.}\binits{T.~E.}}
(\byear{1971}).
\btitle{Random measures and motions of point processes}.
\bjournal{Z. Wahrscheinlichkeitstheorie und Verw. Gebiete}
\bvolume{18}
\bpages{85--115}.
\bdoi{10.1007/BF00569182}
\bmrnumber{292148}
\end{barticle}
\endbibitem

\bibitem{kingman}
\begin{bbook}[author]
\bauthor{\bsnm{Kingman},~\bfnm{J.~F.~C.}\binits{J.~F.~C.}}
(\byear{1993}).
\btitle{Poisson processes}.
\bseries{Oxford Studies in Probability}
\bvolume{3}.
\bpublisher{The Clarendon Press, Oxford University Press, New York}
\bnote{Oxford Science Publications}.
\bmrnumber{1207584}
\end{bbook}
\endbibitem

\bibitem{MR2004226}
\begin{bbook}[author]
\bauthor{\bsnm{M{\o}ller},~\bfnm{Jesper}\binits{J.}} \AND
  \bauthor{\bsnm{Waagepetersen},~\bfnm{Rasmus~Plenge}\binits{R.~P.}}
(\byear{2004}).
\btitle{Statistical inference and simulation for spatial point processes}.
\bseries{Monographs on Statistics and Applied Probability}
\bvolume{100}.
\bpublisher{Chapman \& Hall/CRC, Boca Raton, FL}.
\bmrnumber{2004226}
\end{bbook}
\endbibitem

\bibitem{P69}
\begin{barticle}[author]
\bauthor{\bsnm{Percus},~\bfnm{J.~K.}\binits{J.~K.}}
(\byear{1969}).
\btitle{Exact Solution of Kinetics of a Model Classical Fluid}.
\bjournal{The Physics of Fluids}
\bvolume{12}
\bpages{1560-1563}.
\bdoi{10.1063/1.1692711}
\end{barticle}
\endbibitem

\bibitem{MR971036}
\begin{barticle}[author]
\bauthor{\bsnm{Presutti},~\bfnm{Errico}\binits{E.}} \AND
  \bauthor{\bsnm{Wick},~\bfnm{W.~David}\binits{W.~D.}}
(\byear{1988}).
\btitle{Macroscopic stochastic fluctuations in a one-dimensional mechanical
  system}.
\bjournal{J. Statist. Phys.}
\bvolume{51}
\bpages{871--876}.
\bnote{New directions in statistical mechanics (Santa Barbara, CA, 1987)}.
\bdoi{10.1007/BF01014889}
\bmrnumber{971036}
\end{barticle}
\endbibitem

\bibitem{spohn1982hydrodynamical}
\begin{barticle}[author]
\bauthor{\bsnm{Spohn},~\bfnm{Herbert}\binits{H.}}
(\byear{1982}).
\btitle{Hydrodynamical theory for equilibrium time correlation functions of
  hard rods}.
\bjournal{Ann. Physics}
\bvolume{141}
\bpages{353--364}.
\bdoi{10.1016/0003-4916(82)90292-5}
\bmrnumber{673986}
\end{barticle}
\endbibitem

\bibitem{s}
\begin{bbook}[author]
\bauthor{\bsnm{Spohn},~\bfnm{H.}\binits{H.}}
(\byear{1991}).
\btitle{Large Scale Dynamics of Interacting Particles}.
\bpublisher{Springer-Verlag}, \baddress{Heidelberg}.
\end{bbook}
\endbibitem

\bibitem{spohn23}
\begin{bunpublished}[author]
\bauthor{\bsnm{Spohn},~\bfnm{H.}\binits{H.}}
(\byear{2023}).
\btitle{Hydrodynamic Scales of Integrable Many-Particle Systems}.
\bnote{arXiv:2301.08504v1}.
\end{bunpublished}
\endbibitem

\bibitem{thorisson}
\begin{bbook}[author]
\bauthor{\bsnm{Thorisson},~\bfnm{Hermann}\binits{H.}}
(\byear{2000}).
\btitle{Coupling, stationarity, and regeneration}.
\bseries{Probab. Appl.}
\bpublisher{New York, NY: Springer}.
\end{bbook}
\endbibitem

\bibitem{vandervaart}
\begin{bbook}[author]
\bauthor{\bparticle{van~der} \bsnm{Vaart},~\bfnm{A.~W.}\binits{A.~W.}}
(\byear{1998}).
\btitle{Asymptotic statistics}.
\bseries{Camb. Ser. Stat. Probab. Math.}
\bvolume{3}.
\bpublisher{Cambridge: Cambridge Univ. Press}.
\bdoi{10.1017/CBO9780511802256}
\end{bbook}
\endbibitem

\end{thebibliography}

%% or include bibliography directly:
% \begin{thebibliography}{}
% \bibitem{b1}
% \end{thebibliography}

\end{document}